\let\oldvec\vec
\let\vec\oldvec
\begin{document}
\title{Testing Higher-order Clusterability on graphs \thanks{This work was supported by the National Natural Science Foundation of China under grants 61832003, Shenzhen Science and Technology Program (JCYJ202208181002205012) and Shenzhen Key Laboratory of Intelligent Bioinformatics (ZDSYS20220422103800001).}}
%
%
\author{Yifei Li\inst{1,2}\orcidID{0009-0005-6912-0582}, 
Donghua Yang\inst{1}\orcidID{0000-0002-6102-1804}\and
Jianzhong Li\inst{2}\orcidID{0000-0002-4119-0571}}
\authorrunning{Y. Li, D. Yang et al.}
%
\institute{
	\email{yf.li@stu.hit.edu.cn}\\
    \email{\{yang.dh,lijzh\}@hit.edu.cn}
    \\
	\inst{1}Harbin Institute of Technology, Harbin, Heilongjiang, China
    \\
    \inst{2}Shenzhen Institute of Advanced Technology, Chinese Academy of Sciences, Shenzhen, China
}
\maketitle 
\begin{abstract}
Analysis of higher-order organizations, usually small connected subgraphs called motifs, is a fundamental task on complex networks. This paper studies a new problem of testing higher-order clusterability: given query access to an undirected graph, can we judge whether this graph can be partitioned into a few clusters of highly-connected motifs? This problem is an extension of the former work proposed by Czumaj et al. (STOC' 15), who recognized cluster structure on graphs using the framework of property testing. In this paper, a good graph cluster on high dimensions is first defined for higher-order clustering. Then, query lower bound is given for testing whether this kind of good cluster exists. Finally, an optimal sublinear-time algorithm is developed for testing clusterability based on triangles.

\keywords{Higher-order Clustering\and Property Testing\and High Dimensional Expander\and Spectral Graph Theory.}
\end{abstract}
\section{Introduction}\label{sec:Intro}
\subsection{Motivation}\label{sec:Moti}
In many real-world systems, interactions and relations between entities are not pairwise, but occur in higher-order organizations that are usually small connected patterns denoted as motifs, including triangles, wedges, cliques, etc. Some researches focus on higher-order clustering~\cite{benson2016}, which captures connected motifs into cohesive groups while motifs between different groups have few connections. Authors in \cite{benson2016} gave an example of clustering based on a particular triangle motif, which correctly represents three well-known aquatic layers in Florida Bay foodweb. Higher-order clustering has been widely applied in social network analysis~\cite{li2017}, gene regulation~\cite{gama2016} and neural networks~\cite{duval2022}. However, graphs such as Actors~\cite{actor1990} and Coauthoring~\cite{newman2002}, which are nearly bipartite, are not suitable for higher-order clustering based on triangles or cliques. Therefore, it is important to judge whether the given graph is suitable for clustering based on the specified motif. A helpful method is to use property testing~\cite{rubinfeld1996}, which is a framework that decides whether an object has a specific property or is "far" from objects having this property. However, none of the former property testing on graphs considered higher-order motifs.

In this paper, we develop a new framework of testing whether a given graph is higher-order clusterable, which is compatible with the low-order testing problem given by Czumaj et al.~\cite{czumaj2015}. First, what is a good high-dimension cluster is defined on undirected graphs. Requirements of the high-dimension cluster wouldn't violate the topological structure in lower-dimension. The problem of testing higher-order clusterability is then proposed. It asks whether there exists a good high-dimension cluster or is far from having that kind of cluster. Finally, A sublinear-time algorithm for testing triangle-based clusterability is developed, which reaches the lower bound, $\Omega(\sqrt{n})$, and is nearly optimal.

\subsection{Related Work}
This section reviews some previous researches and analyze their shortages or differences compared to the work in this paper.\vspace{5pt}\\
\textbf{Previous work on higher-order graph clustering.} 
Earlier researches on higher-order clustering is related to hypergraph partitioning~\cite{karypis1999}. Benson et al.~\cite{benson2016} proposed a generalized framework with motif conductance that could cluster higher-order connectivity patterns. They implemented an algorithm  without suffering hypergraph fragmentation and its time complexity is bounded by the number of motifs. Tsourakakis et al.~\cite{tsourakakis2017} shared the same contribution in parallel that a weighted graph can be used to replace the hypergraph in motif-based clustering. Li et al.~\cite{li2019} proposed an edge enhancement approach to solve the issue in isolated nodes.

However, there solutions have some drawbacks. First, whether there exists triangle expanders that are not edge expanders is still unknown. It means that triangle-based clustering in~\cite{tsourakakis2017} could still violate lower-order cluster structure. In addition, time complexity of higher-order clustering are bounded by the number of motifs, which is $\Omega(n^{3/2})$ if motif is triangle and is far from sublinear. Furthermore, they do not consider whether the given graph is suitable for higher-order clustering. As a result, an inappropriate clustering would suffer severe computation cost on huge graphs.\vspace{5pt}\\
\textbf{Previous work on graph property testing.}
Framework on testing graph properties was first proposed by Goldreich and Ron~\cite{goldreich1997}, who present an alternative model that each query on bounded-degree graph returns a vertex with one indexed neighbor. They showed that testing whether a graph is an expander requires $\Omega(\sqrt{n})$ queries under this model. Latter work~\cite{czumaj2010}~\cite{kale2011} provided optimal algorithms that reach this lower bound. Czumaj et al.~\cite{czumaj2015} defined $(k,\phi)$-clusterable graphs that can be partitioned into $k$ clusters with requirements on both internal and external conductance. They maintained a logarithmic gap between two conductance so that testing clusterability is equivalent to testing expansion when $k=1$. Chiplunkar et al.~\cite{chiplunkar2018} eliminated the logarithmic gap at the cost of rising the lower bound to $\Omega(n^{1/2+O(\epsilon)})$. Gluch et al.~\cite{gluch2021} designed a clustering oracle that allows fast query access and proposed an optimal algorithm. 

All the above testers consider the property of low-order expansion and fail to unravel higher-order organizations such as dense cluster of triangles. Furthermore, these testers adopt simple or lazy random walk that starts from vertices, which cannot catch information of triangles or k-cliques. Therefore, they cannot be easily extended to learning clusterability of higher order motifs. 

\subsection{Contributions}
Specifically, contributions are summarized as follows:
\begin{itemize}[leftmargin=12pt, itemsep=5pt, topsep=5pt, partopsep=5pt]
\item[1.] Problem of testing higher-order clusterability based on a new definition of high-dimension cluster. 
\item[2.] Proof that the redefined problem is compatible with the original one defined by Czumaj et al.~\cite{czumaj2015} 
\item[3.] An $\Omega(\sqrt{n})$ query lower bound of testing higher-order clusterability.
\item[4.] A sublinear-time algorithm for testing triangle-based clusterability, which reaches the lower bound with neighbor query oracle.
\end{itemize}
\subsection{Organization of the paper} Section \ref{sec:PaP} provides preliminary and statement of testing higher-order clusterability on bounded degree graphs. Section \ref{sec:comp} establishes relationship between higher-order clusterability and counterpart, and then gives a query lower bound. Section \ref{sec:Algo} proposes algorithms for testing triangle clusterability and analysis of correctness and running time. Section~\ref{sec:Fut} gives a summary on the whole paper and presents the future work.

\section{Preliminary and Problem Statement}\label{sec:PaP}
\subsection{Testing Clusterability on bounded-degree graphs}
Here is a brief review on the problem of testing graph cluster structure. Let $G=(V, E)$ be an undirected graph. $deg(v)$ denotes the degree of vertex $v$. For two non-empty vertex sets $S$ and $C$, $S\subset C\subseteq V$, let $Vol(S)=\Sigma_{v\in S}deg(v)$ denote the volume of set $S$. The \textit{external conductance}~\cite{kannan2004} of $S$ on $C$ is defined as 
$$\Phi_C(S)=\frac{|E(S,C\backslash S)|}{\min\{Vol(S),Vol(C\backslash S)\}},$$ 
where $E(S,C\backslash S)$ is the set of edges with two endpoints contained in set $S$ and $C\backslash S$ respectively. In addition, $G[C]$ denotes the \textit{induced graph} whose vertex set is $C$ and whose edge set consists of all edges with both endpoints included in $C$. Then the \textit{internal conductance} is defined as 
$$\Phi(G[C])=\min\limits_{\substack{\emptyset\neq S\subset C\\ |S|\leq|C|/2}}\frac{|E(S,C\backslash S)|}{Vol(S)}.$$
Since $\Phi_C(S)=\Phi_C(C\backslash S)$, usually only vertex sets $S$ with $|S|\leq |V|/2$ are considered for convenience. The definition of characterizing the cluster structure of undirected graph is shown as follows,
\begin{definition} (\textbf{($k,\phi_{in},\phi_{out}$)-cluster}~\cite{gharan2014}). 
    Given an undirected graph $G(V,E)$ with parameters $k,\phi_{in},\phi_{out}$, find an $h$-partition $\mathbb{P}$ of $V$, $\mathbb{P}=(P_1,P_2$, $\dots,P_h)$ with $1\leq h\leq k$, and for each $i$, $1\leq i \leq h$, $\Phi(G[P_i])\geq\phi_{in}$ and $\Phi_G(P_i)\leq\phi_{out}$.
\end{definition}
In the property testing framework, $G$ is given as a \textit{neighbor query oracle}. When given an index pair $(v, i)$, the oracle returns the predetermined $i$th neighbor of vertex v if $i$ doesn't exceed the degree of $v$, otherwise it would return \textit{NULL}.
\begin{definition} (\textbf{Testing ($k,\phi_{in},\phi_{out},\epsilon$)-clusterability}).~\cite{czumaj2015}
    Given a neighbor oracle access to graph $G(V,E)$ with maximum degree at most $d_{max}$ and parameters $k$,$\phi_{in}$,$\phi_{out}$,$\epsilon$, in which $\phi_{in},\phi_{out}$ satisfy $\phi_{out}=O(\frac{\epsilon^4}{\log{n}}\phi_{in}^2)$, with probability at least $2/3$,
    \begin{itemize}[topsep=3pt, partopsep=3pt, itemsep=3pt]
        \item[--] \textbf{accept} if there exists a $(k,\phi_{in},\phi_{out})$-cluster on $G$,
        \item[--] \textbf{reject} if $G$ is $\epsilon$-far from having a $(k,\phi_{in},\phi_{out})$-cluster,
    \end{itemize}
    where $\epsilon$-far means that $G$ cannot be accepted by modifying (inserting or deleting) no more than $\epsilon d_{max}n$ edges.
\end{definition}

Authors in \cite{czumaj2015} gave a detailed explanation on why they chose a logarithmic gap between $\phi_{in}^2$ and $\phi_{out}$. This paper maintains the gap of \cite{czumaj2015} in testing higher-order cluster structure and show that the $1$-dimension cluster is compatible with $(k,\phi_{in},\phi_{out})$-cluster in Theorem~\ref{the:1d}.

\subsection{Testing higher-order clusterability on bounded-degree graphs}
A few concepts on simplicial complex would be introduced before showing the definition~\ref{def:par} of higher-order cluster and the problem~\ref{def:test} of testing higher-order clusterability. These concepts could help us understand the graph in a high dimensional view. A \textit{d-simplex}~\cite{spanier1981} is the simplest geometric figure in $d$ dimension, e.g., point (0-simplex), line segment (1-simplex), triangle (2-simplex) and tetrahedron (3-simplex). A \textit{d-simplicial complex} $X$ is a collection of sets constructed by gluing together simplices with maximal dimension $d$. $X$ should satisfy a closure property that for any simplex $\sigma\in X$, all of its subsets $\tau\subset\sigma$ are also in $X$. $\sigma$ is denoted as a \textit{face} of $X$. Dimension of a face $dim(\sigma)$ equals to the number of vertices in it minus 1, i.e., $dim(\sigma)=|\sigma|-1.$ Empty set satisfies $\emptyset\in X$ with dimension $-1$ to keep closure. Other definitions are shown as follows: 
\begin{itemize}[leftmargin=*, topsep=5pt, partopsep=10pt, itemsep=5pt, label=--]
    \item \textit{$i$-faces $X_i$} is a set of all faces with dimension $i$.
    \item \textit{$i$-cochain $C(i)$} is a subset of $X_i$. Space of $i$-cochain is $S^i(X)$.
    \item \textit{Degree of face} $deg_d(\sigma)$ is the number of $d$-dimension faces that contain $\sigma$.
    \item \textit{Volume of $i$-cochain} $Vol_d(C(i))=\sum\limits_{\tau\in C(i)}deg_d(\tau)$.
    \item \textit{Norm of $i$-cochain} $\|C(i)\|_d=\frac{Vol_d(C(i))}{Vol_d(X_i)}$.
    \item \textit{Adjacent $i$-dimension faces} $a\sim b$ means there exists a face $\tau\in X_{i+1}$ that $a,b\subset \tau$.
    \item \textit{Induced $(i+1)$-subcomplex} $C(i)[X_{i+1}]=(C(i), \{\sigma\in X_{i+1}|\exists\tau\in\sigma: \tau\in C(i)\})$.
\end{itemize}
Kaufman and Mass~\cite{kaufman2017} proposed a high dimensional expander as follows,
\begin{definition} (\textbf{Colorful Expander~\cite{kaufman2017}}).
Let $X$ be a d-dimension simplicial complex. $X$ is an $\epsilon$-colorful expander, $\epsilon>0$, if for any $i$-cochain $C(i)\in S^i(X), 0\leq i<d, 0<\|C(i)\|_d\leq 1/2$,
    $$\frac{\|\mathbb{F}(C(i),X_i\backslash C(i))\|_d}{\|C(i)\|_d}\geq\epsilon,$$
    where $\mathbb{F}(C(i), X_i\backslash C(i))$ is the \textbf{expander face} (similar to cut on graphs) that is defined as
    $$\mathbb{F}(C(i), X_i\backslash C(i))=\{\sigma\in X_{i+1}|\exists\tau,\tau'\subset\sigma:\tau\in C(i),\tau'\in X_i\backslash C(i)\}.$$
\end{definition}
Similar to the internal and external conductance on undirected graphs, a normalized version of conductance is extended to simplicial complex.
\begin{definition} (\textbf{Normalized External Conductance}). 
    Let $X$ be a $d$-dimensi-on simplicial complex, $d\geq 1, 0\leq i<d$, $C(i)$ and $S(i)$ are both i-cochains, $\emptyset\neq S(i)\subset C(i)\subseteq X_i$, the normalized external conductance of $S(i)$ on $C(i)$ equals to $$\Psi_{d,C(i)}\{S(i)\}=\frac{\|\mathbb{F}(S(i),C(i)\backslash S(i))\|_d}{\min\{\|S(i)\|_d,\|C(i)\backslash S(i)\|_d\}}.$$
\end{definition}

\begin{definition} (\textbf{Normalized Internal Conductance}). 
    Let $X$ be a d-dimensi-on simplicial complex, $d\leq 1, 0\leq i<d$, $C(i)$ is an i-cochain, $\emptyset\neq C(i)\subseteq X_i$. The normalized internal conductance of $C(i)$ is $$\Psi_d(C(i)[X_{i+1}])=\min\limits_{\substack{\emptyset\neq S(i)\subset C(i)\\ Vol_d(S(i))\leq Vol_d(C(i))/2}}\frac{\|\mathbb{F}(S(i),C(i)\backslash S(i))\|_d}{\|S(i)\|_d}.$$
\end{definition}
The final step is to establish a unique mapping from simple undirected graph to d-dimension simplicial complex, which is easy to implement since the process can be seen as dimension raising.
\begin{lemma}\label{lem:raise}
    Given an undirected graph $G(V,E)$ and integer $d>1$, there exists a unique d-dimension simplicial complex $X^d(G)=\{X_0(G),X_1(G), X_2(G),\dots, X_{d}\\(G)\}$ that satisfies $X_0(G)=V, X_1(G)=E$, for each $i$, $1<i\leq d$, 
    $$X_i(G)=\{\bigcup(s_1,s_2,\dots,s_{i+1})|s_j,s_k\in X_{i-1}(G):s_j\sim s_k,\forall 1\leq j<k\leq i+1\}.$$
\end{lemma}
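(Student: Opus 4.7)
The plan is to exhibit $X^d(G)$ explicitly as the $d$-truncated flag (clique) complex of $G$ and then show that this is the only simplicial complex satisfying the stated recursion. Concretely, I would set $X_i(G)$ to be the collection of all $(i+1)$-element subsets $\sigma\subseteq V$ that induce a clique in $G$. Then $X_0(G)=V$ and $X_1(G)=E$ hold by definition, and closure is immediate since every nonempty $\tau\subset\sigma$ also induces a clique; so $\{X_0(G),\dots,X_d(G)\}$ is a genuine $d$-dimensional simplicial complex.

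Next I would verify that this construction satisfies the recursive formula. For the ``$\supseteq$'' direction, if $\sigma\in X_i(G)$ is an $(i+1)$-clique, then its $i+1$ distinct $i$-element subsets $s_1,\dots,s_{i+1}$ are themselves cliques, hence lie in $X_{i-1}(G)$; any two of them are contained in the common face $\sigma\in X_i(G)$ so $s_j\sim s_k$; and $\bigcup_j s_j=\sigma$. For the ``$\subseteq$'' direction, suppose $s_1,\dots,s_{i+1}\in X_{i-1}(G)$ are pairwise adjacent; each $s_j\cup s_k$ sits inside some $\tau_{jk}\in X_i(G)$, which is an $(i+1)$-clique in $G$, so every pair of vertices drawn from $\bigcup_j s_j$ is joined by an edge. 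A short counting check using $|s_j|=i$ and $s_j\neq s_k$ confirms $|\bigcup_j s_j|=i+1$, so the union is a clique and lies in $X_i(G)$.

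For uniqueness I would induct on $i$. The cases $i=0$ and $i=1$ are forced by the conditions $X_0(G)=V$ and $X_1(G)=E$. Assuming $X_{i-1}(G)$ is already determined, the closure axiom together with the recursion pins down $X_i(G)$: every $\sigma\in X_i$ must arise as the union of its $i$-element subsets, which by closure lie in $X_{i-1}$ and are pairwise adjacent via $\sigma$ itself; conversely, any $(i+1)$-set produced by the formula belongs to $X_i$. This matches the clique-complex prescription, so $X_i(G)$ is uniquely determined, and hence so is $X^d(G)$.

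The main obstacle is the apparent circularity in the recursion: the adjacency $s_j\sim s_k$ in $X_{i-1}(G)$ is defined through containment in a face of $X_i(G)$, which is precisely the set being constructed. I would resolve this by reading the recursion as a fixed-point equation and arguing that, under the closure axiom and the initial data $X_0=V,\,X_1=E$, the flag-complex solution is its unique consistent fixed point. Once that equivalence is established, both existence and uniqueness follow cleanly from the induction above.
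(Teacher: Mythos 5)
The paper never actually proves Lemma~\ref{lem:raise} (the appendix only supplies proofs for Lemmas 2, 3, 5, 6 and Theorem 2), so there is no reference argument to compare against; I am reviewing your proposal on its own merits. Your identification of $X^d(G)$ as the $d$-truncated clique (flag) complex is surely the intended construction, and your handling of closure, the ``$\supseteq$'' containment, and the circularity of $\sim$ via the fixed-point reading are all sensible. However, the ``$\subseteq$'' direction contains a genuine gap: the claimed ``short counting check'' that distinctness and pairwise adjacency of $s_1,\dots,s_{i+1}\in X_{i-1}(G)$ force $\bigl|\bigcup_j s_j\bigr|=i+1$ is false. Take $G=K_4$ on $\{1,2,3,4\}$ and $i=2$, with $s_1=\{1,2\}$, $s_2=\{1,3\}$, $s_3=\{1,4\}$. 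These are distinct edges in $X_1(G)$, and each pair is contained in a common $2$-face of the clique complex (namely $\{1,2,3\}$, $\{1,2,4\}$, $\{1,3,4\}$), so $s_j\sim s_k$ for all $j<k$ under the paper's definition of adjacency; yet $\bigcup_j s_j=\{1,2,3,4\}$ has four elements, not three. Read literally, the recursion in the lemma therefore manufactures a set of the wrong cardinality, and the flag complex you construct does not satisfy that formula verbatim.

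To repair the argument you must add the constraint that the $s_1,\dots,s_{i+1}$ are precisely the $i+1$ codimension-one faces of a single $(i+1)$-element set $\sigma$ (equivalently, require $\bigl|\bigcup_j s_j\bigr|=i+1$ explicitly). Under that restriction, each pair of the $s_j$ shares exactly $i-1$ vertices, the union is an $(i+1)$-set, and your clique argument (every pair of its vertices lies together in some $\tau_{jk}\in X_i$, hence is joined by an edge) correctly shows $\bigcup_j s_j\in X_i(G)$. Your uniqueness-by-induction paragraph then goes through unchanged. As written, though, the proof asserts a counting fact that the $K_4$ example refutes, so the restriction has to be made explicit rather than waved through.
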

More generally, $X^d(G)$ is constructed by gluing together all $i$-cliques (triangles when $i=3$) to be its $(i-1)$-faces. The formal definition of high-dimension cluster that mentioned in the abstract is as follows,
\begin{definition}  (\textbf{$d$-dimension ($k,\psi_{in},\psi_{out}$)-cluster})\label{def:par}
    Given an undirected gr-aph $G(V,E)$ with parameters $d,k,\psi_{in},\psi_{out}$, find an $h$-partition $\mathbb{P}$ of $V$, $\mathbb{P}=(P_1,P_2,\dots,P_h)$ with $1\leq h\leq k$, and for each $i,r$, $1\leq i \leq h$,$0\leq r<d$, $\Psi_d(X_{r}(G[P_i])[X_{r+1}(G)])\geq\psi_{in}$ and $\Psi_{d,X_{r}(G)}(X_{r}(G[P_i]))\leq\psi_{out}$.
\end{definition}
The problem of testing higher-order clusterability is defined as follows,
\begin{definition} (\textbf{Testing $d$-dimension ($k,\psi_{in},\psi_{out},\epsilon$)-clusterability}).\label{def:test}
    Given a neighbor oracle access to graph $G(V,E)$ with maximum degree at most $d_{max}$ and parameters $d$,$k$, $\psi_{in}$,$\psi_{out}$,$\epsilon$, in which $\psi_{in},\psi_{out}$ satisfies $\psi_{out}=O(\frac{\epsilon^4}{\log{n}}\psi_{in}^2)$, with probability at least $2/3$,
    \begin{itemize}[topsep=3pt, partopsep=3pt, itemsep=3pt]
        \item[--] \textbf{accept} if there exists a $d$-dimension $(k,\psi_{in},\psi_{out})$-cluster on $G$,
        \item[--] \textbf{reject} if $G$ is $\epsilon$-far from having a $d$-dimension $(k,\psi_{in},\psi_{out})$-cluster,
    \end{itemize}
    where $\epsilon$-far denotes $G$ cannot be accepted by modifying (insertion or deletion) no more than $\epsilon d_{max}n$ edges.
\end{definition}

\section{Analysis of Compatibility and Lower Bound}\label{sec:comp}
\subsection{Compatibility with framework of testing clusterability}
The relationship between $1$-dimension $(k,\psi_{in},\psi_{out})$-partiton and $(k,\phi_{in},\phi_{out})$-partiton~\cite{gharan2014} is shown as follows,
\begin{theorem}\label{the:1d}
    $1$-dimension ($k,\psi_{in},\psi_{out}$)-cluster is equivalent to ($k,\frac{\psi_{in}}{2},\frac{\psi_{out}}{2}$)-cluster on undirected graph.
\end{theorem}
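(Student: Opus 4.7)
The plan is to unpack Definition \ref{def:par} at $d=1$ using the preliminary simplicial-complex definitions and observe that, after routine bookkeeping, the normalized conductances reduce to exactly twice the graph conductances appearing in the $(k,\phi_{in},\phi_{out})$-cluster definition. Since the only index allowed is $r=0$, the entire statement is about $X_0(G)=V$ and $X_1(G)=E$, with no genuinely higher-dimensional content.

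First I would translate the basic quantities. Because $X^1(G)$ contains only vertices and edges, for any vertex $v$ one has $deg_1(v)=deg(v)$, and for any edge $e$ the only $1$-face containing $e$ is $e$ itself, so $deg_1(e)=1$. Consequently $Vol_1(X_0)=\sum_{v\in V}deg(v)=2|E|$ and $Vol_1(X_1)=|E|$, and for any vertex cochain $C(0)\subseteq V$ we have $Vol_1(C(0))=Vol(C(0))$, while for any edge cochain $C(1)\subseteq E$ we have $Vol_1(C(1))=|C(1)|$. The expander face $\mathbb{F}(S(0),C(0)\setminus S(0))$ consists of the $1$-faces containing both a vertex from $S(0)$ and a vertex from $C(0)\setminus S(0)$, which is precisely the edge cut $E(S,C\setminus S)$ used in the definition of $\Phi_C(S)$.

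Plugging these identities into the two normalized conductances, the common denominator $Vol_1(X_0)=2|E|$ in the numerator $\|\mathbb{F}(\cdot,\cdot)\|_1=|E(S,C\setminus S)|/|E|$ and in $\|S(0)\|_1=Vol(S)/(2|E|)$ differ by a factor of $2$, which yields
\[
\Psi_{1,C(0)}(S(0))=2\Phi_C(S),\qquad \Psi_1\bigl(C(0)[X_1]\bigr)=2\Phi(G[C]).
\]
Therefore the conditions $\Psi_1(X_0(G[P_i])[X_1(G)])\ge\psi_{in}$ and $\Psi_{1,X_0(G)}(X_0(G[P_i]))\le\psi_{out}$ in Definition \ref{def:par} are, term by term, the same as $\Phi(G[P_i])\ge\psi_{in}/2$ and $\Phi_G(P_i)\le\psi_{out}/2$, which is exactly a $(k,\psi_{in}/2,\psi_{out}/2)$-cluster. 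Since every step of the translation is an identity, the implication runs in both directions.

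I do not expect any serious obstacle: this is essentially a definition-chase, and the only arithmetic point worth highlighting is the factor of $2$ coming from $Vol_1(X_0)=2Vol_1(X_1)$. The one potentially awkward spot is matching the ``$|S|\le|C|/2$'' constraint in the original $\Phi(G[C])$ with the ``$Vol_1(S(0))\le Vol_1(C(0))/2$'' constraint in $\Psi_1$; I would handle this by noting that since the minimand $|E(S,C\setminus S)|/Vol(S)$ in the $1$-dimensional case is invariant under the volume-based restriction, the two minima coincide, so no additional slack is introduced and the equivalence is exact rather than approximate.
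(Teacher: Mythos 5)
Your proposal is correct and takes essentially the same route as the paper: set $d=1$, observe that $X_0=V$, $X_1=E$, and the normalized quantities reduce to twice the graph conductances, and the equivalence follows. You make the degree bookkeeping ($deg_1(v)=deg(v)$, $deg_1(e)=1$, $Vol_1(X_0)=2Vol_1(X_1)$) more explicit than the paper does, and you also flag the minor mismatch between the cardinality constraint $|S|\le|C|/2$ in the graph definition and the volume constraint $Vol_1(S(0))\le Vol_1(C(0))/2$ in the simplicial one — a subtlety the paper silently ignores — though your stated resolution of it is more of an assertion than an argument.
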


\begin{proof}
    For $1$-dimension $(k,\psi_{in},\psi_{out})$-partiton, $1\leq i\leq h$, $X_0=V$ and $X_1=E$, so $X_0(P_i[G])[X_1(G)]=P_i[G]$, $X_0(P_i[G])=P_i$ and $X_0(G)=V$. Therefore,
    $$\psi_{out}\geq\Psi_{d,X_{r}(G)}(X_{r}(P_i[G]))=\frac{|E(P_i,V\backslash P_i)|/|E|}{Vol(P_i)/Vol(V)}=2\Phi_V(P_i).$$
    Similarly, $\psi_{in}\leq\Psi_d(X_{0}(P_i[G])[X_{1}(G)])=2\Phi(P_i[G])$. The proof is finished by combining these two inequalities. 
\end{proof}

According to Theorem~\ref{the:1d}, algorithms for testing 1-dimension $(k,\psi_{in},\psi_{out},\epsilon)$-clusterability can also test $(k,\frac{\psi_{in}}{2},\frac{\psi_{out}}{2},\epsilon)$-clusterability in~\cite{czumaj2015}.
\subsection{Compatibility of high-dimension ($k,\psi_{in},\psi_{out}$)-cluster}
This section mainly deals with undirected graphs without outliers, which means all vertices or edges are contained in at least one triangle or $d$-clique. It is natural since if the graph has outliers, they can be eliminated without affecting quality of higher-order clustering. Following definition is necessary to prove compatibility,
\begin{definition} (\textbf{Induced $i$-graph}~\cite{kaufman2017}) Given a $d$-dimension simplicial complex $X$. For any $i$ with $0\leq i<d$, the $i$-graph $G_i(V_i, E_i)$ satisfies,
    \begin{itemize}[topsep=3pt, partopsep=3pt, itemsep=3pt]
        \item[1)] Every $i$-dimension face $\tau$ in $X_i$ is corresponding to a unique vertex $V(\tau)$.
        \item[2)] There is an edge between the corresponding vertex for any two adjacent $i$-dimension faces $\tau, \tau'$, i.e., $E_i=\{(V(\tau),V(\tau'))|\tau\sim\tau'\}$ 
    \end{itemize} 
\end{definition}
Generally speaking, induced $i$-graph is a dimensional reduction that maps the complex constructed by two $i$-faces $(X_i,X_{i+1})$ to an undirected graph. Corresponding to the graph without outliers, \textit{pure simplicial complex} $X$ is adopted that for any face $\tau\in X$ with $dim(\tau)<dim(X)$, there exists a face $\sigma\in X$, $dim(\sigma)=dim(X)$, such that $\tau\subset\sigma$. Then the following lemma holds,

\begin{lemma}\label{lem:norm}
    Let $X$ be a pure $d$-dimension simplicial complex. Given $t$ that satisfies $1\leq t<d$, for any $i$-cochain $C(i)$ that satisfies $0\leq i<t$, the external conductance is equal to $\frac{\|\mathbb{F}(C(i),X_i\backslash C(i))\|_d}{\|C(i)\|_d}.$ (see Appendix~\ref{lem2})
\end{lemma}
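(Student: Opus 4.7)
The plan is to unpack the definition of the normalized external conductance $\Psi_{d,X_i}(C(i))$ and show that its denominator collapses to $\|C(i)\|_d$ under the hypotheses of the lemma. First I would specialize the definition of Normalized External Conductance with the outer cochain taken to be $X_i$ itself, which yields
$$\Psi_{d,X_i}(C(i)) = \frac{\|\mathbb{F}(C(i), X_i\setminus C(i))\|_d}{\min\{\|C(i)\|_d,\ \|X_i\setminus C(i)\|_d\}}.$$
This reduces the problem to showing that the minimum on the right-hand side equals $\|C(i)\|_d$.

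Next I would invoke the standard convention that $C(i)$ is taken to be the smaller side of the partition, $\|C(i)\|_d\leq 1/2$, mirroring the low-order convention $|S|\leq |V|/2$ built into the definition of $\Phi(G[C])$. From this, $\|C(i)\|_d\leq \|X_i\setminus C(i)\|_d$, so the minimum collapses to $\|C(i)\|_d$ as claimed. The purity of $X$ and the dimensional constraint $1\leq t<d$, $0\leq i<t$ are used precisely here: purity guarantees that every $i$-face is contained in some $d$-face, so $deg_d(\tau)>0$ for all $\tau\in X_i$, which in turn ensures that both $\|C(i)\|_d$ and $\|X_i\setminus C(i)\|_d$ are strictly positive whenever neither piece is empty, and that the normalization by $Vol_d(X_i)$ is well-defined.

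The main obstacle is pinning down exactly what ``external conductance'' denotes in this lemma. If it refers to the normalized version $\Psi_{d,X_i}$ above, the argument is essentially a direct unraveling of definitions together with the smaller-side convention. If it is instead intended to refer to the graph-theoretic external conductance of $V(C(i))$ inside the induced $i$-graph $G_i$, then the harder step is a bijective counting argument, powered by purity, that equates the degree-weighted cut in $G_i$ with $Vol_d(\mathbb{F}(C(i), X_i\setminus C(i)))$ up to a global normalization by $Vol_d(X_{i+1})$. In either reading, purity of $X$ is the crucial hypothesis: it is what makes the dimensional reduction between $d$-volumes and lower-order face counts consistent, and without it the cochains $C(i)$ could carry no $d$-volume and the identity would fail to type-check.
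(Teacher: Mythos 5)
Your proposal misreads the claim, and the misreading causes it to miss the lemma's central idea. Notice that the hypothesis introduces a parameter $t$ with $1\leq t<d$, yet your first reading (collapsing the $\min$ in $\Psi_{d,X_i}$ by a smaller-side convention) never uses $t$ at all; that alone signals the interpretation is off. What the paper actually proves is a dimension-invariance statement: for a \emph{pure} $d$-complex, the ratio $\|\mathbb{F}(C(i),X_i\backslash C(i))\|_t / \|C(i)\|_t$ (the $t$-normalized conductance appearing in the definition of a $t$-dimensional colorful expander) equals the same ratio normalized by dimension $d$. This is exactly what Lemma~\ref{lem:color}'s proof then consumes when it writes $\|C(r)\|_t=\|C(r)\|_d$.

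The missing idea is a counting identity powered by purity. The paper shows that for any $i$-face $\tau$,
\[
deg_d(\tau)=\frac{1}{(d-i)(i+1)}\,deg(V(\tau))=\frac{t-i}{d-i}\,deg_t(\tau),
\]
where $deg(V(\tau))$ is the degree in the induced $i$-graph. The crucial point is that the proportionality constant $\frac{t-i}{d-i}$ is independent of $\tau$; it therefore cancels when one normalizes by $Vol_d(X_i)$ versus $Vol_t(X_i)$, giving $\|C(i)\|_d=\|C(i)\|_t$ for every $i$-cochain, and likewise for the $(i+1)$-cochain $\mathbb{F}(C(i),X_i\backslash C(i))$. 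Dividing the two norm equalities gives the claimed identity. Your second reading gestures at ``a bijective counting argument powered by purity,'' which is the right instinct, but you never formulate or prove the proportionality $deg_d(\tau)\propto deg_t(\tau)$, which is the whole content of the lemma; and your proposed conclusion (equating a weighted cut in $G_i$ with $Vol_d(\mathbb{F}(\cdot))$ up to a fixed normalization) is not what the lemma asserts. Also, your claim that purity is needed merely to keep degrees positive undersells it: purity is what guarantees the counting $deg_d(\tau)=\frac{1}{d-i}\sum_{\sigma\supset\tau,\ dim(\sigma)=i+1} deg_d(\sigma)$ works at every level, because every face must extend to a top-dimensional face.
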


\begin{lemma}\label{lem:color}
    Let $X$ be a pure $d$-dimension $\epsilon$-colorful expander, then for any $t$ that $1\leq t<d$, $X$ must be a $t$-dimension $\epsilon$-colorful expander. (see Appendix~\ref{lem3})
\end{lemma}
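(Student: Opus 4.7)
The natural strategy is to reduce the $t$-dimensional expansion bound to the $d$-dimensional one using the dimension-invariance provided by Lemma~\ref{lem:norm}, so that the $\epsilon$-expansion at the top dimension automatically transfers downward.

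I would proceed as follows. First, fix any $t$ with $1\leq t<d$, an index $i$ with $0\leq i<t$, and an $i$-cochain $C(i)\in S^i(X)$ with $0<\|C(i)\|_t\leq 1/2$; the target inequality is $\|\mathbb{F}(C(i),X_i\setminus C(i))\|_t/\|C(i)\|_t\geq\epsilon$. Second, since $i<t<d$ and $X$ is pure, invoke Lemma~\ref{lem:norm} to identify the above ratio with its $d$-indexed analogue $\|\mathbb{F}(C(i),X_i\setminus C(i))\|_d/\|C(i)\|_d$: both equal the (dimension-free) external conductance of $C(i)$. Third, since $X$ is a $d$-dimensional $\epsilon$-colorful expander and $0\leq i<d$, with the appropriate norm condition at dimension $d$ (see below), the definition of colorful expander bounds the $d$-indexed ratio from below by $\epsilon$. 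Composing the two steps yields the required $t$-dimensional inequality.

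The only delicate point is the normalization: we must know that $\|C(i)\|_d\leq 1/2$ follows from $\|C(i)\|_t\leq 1/2$. If Lemma~\ref{lem:norm} in fact gives the stronger conclusion that $\|C(i)\|_t=\|C(i)\|_d$ on pure complexes, this step is immediate. Otherwise, one can argue via the complement: whichever of $C(i)$ and $X_i\setminus C(i)$ has $d$-norm at most $1/2$ is fed into the $d$-expansion hypothesis, and since $\mathbb{F}$ is symmetric in its two arguments, the resulting lower bound on the ratio is obtained and then pulled back to the $t$-indexed quantity by Lemma~\ref{lem:norm}. In either case, the main obstacle---essentially the only one---is packaged into Lemma~\ref{lem:norm} itself; once that dimension-invariance is in hand, the current lemma is a short deduction that barely uses the expander hypothesis beyond its definition.
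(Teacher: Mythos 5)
Your proof is correct and follows essentially the same route as the paper: both reduce the $t$-dimensional expansion bound to the $d$-dimensional hypothesis by invoking the norm invariance $\|C(i)\|_t=\|C(i)\|_d$ (and the corresponding invariance for $\|\mathbb{F}(\cdot,\cdot)\|$) established in the proof of Lemma~\ref{lem:norm}, then apply the $d$-dimensional colorful-expander definition and pull the bound back down. Your remark about the normalization condition is exactly the step the paper handles by noting $\|C(r)\|_t=\|C(r)\|_d\leq 1/2$, so there is no gap.
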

Through the above two lemmas, it can be proved that if there exists a good cluster in high dimension, it is exactly a good cluster in lower dimension. 

\begin{theorem}\label{the:one}
    Given an undirected graph $G(V,E)$ without outliers, if $h$-partition $\mathbb{P}$ is a $d$-dimension ($k,\psi_{in},\psi_{out}$)-cluster with $d\geq 2, 1\leq h\leq k$, then $\mathbb{P}$ must be a $t$-dimension ($k,\psi_{in},\psi_{out}$)-cluster for all $t$ that satisfies $1\leq t\leq d-1$. (see Appendix~\ref{the2})
\end{theorem}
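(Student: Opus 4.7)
The plan is to reduce the claim to Lemma~\ref{lem:norm}, which shows that the $d$-normalized conductance of an $r$-cochain in a pure complex agrees with its counterpart under any other valid normalization (effectively, all such versions equal the ordinary conductance on the induced $r$-graph $G_r$). Since $G$ has no outliers, $X^d(G)$ is a pure $d$-dimension simplicial complex, so the hypothesis of Lemma~\ref{lem:norm} is satisfied globally.

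Observe next that the inequalities required by the $t$-dimension $(k,\psi_{in},\psi_{out})$-cluster definition are indexed by $0\le r<t$ and hence form a subfamily of those required by the $d$-dimension definition (indexed by $0\le r<d$). It therefore suffices to show that for every cluster $P_i$ and every $r$ with $0\le r<t\le d-1$, the two quantities $\Psi_{d,X_r(G)}(X_r(G[P_i]))$ and $\Psi_d(X_r(G[P_i])[X_{r+1}(G)])$ are unchanged upon replacing the subscript $d$ by $t$. For the external conductance this is immediate from Lemma~\ref{lem:norm} applied to $C(r):=X_r(G[P_i])$: both the $d$- and $t$-normalized ratios collapse to the same canonical value on $G_r$, so the upper bound $\psi_{out}$ transfers verbatim. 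For the internal conductance I would apply the same principle inside the minimization: each ratio $\|\mathbb{F}(S(r),C(r)\backslash S(r))\|_\bullet/\|S(r)\|_\bullet$ and each feasibility constraint $Vol_\bullet(S(r))\le Vol_\bullet(C(r))/2$ is invariant under the choice $\bullet\in\{t,d\}$ by the same double-counting combinatorics, so the minima over the same feasible set agree and $\psi_{in}$ transfers as well.

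The main obstacle I anticipate is that Lemma~\ref{lem:norm} is stated only for the ambient-normalized external conductance; its use inside the internal-conductance minimum requires the additional observation that in a pure complex the level-$r$ degree counts scale uniformly with the target dimension, so that ratios computed relative to a sub-cochain inherit the same invariance. If each induced subgraph $G[P_i]$ is itself outlier-free, one can simply reapply Lemma~\ref{lem:norm} inside the induced subcomplex $X^d(G[P_i])$; otherwise a short ambient-complex double-counting argument is needed to get the proportionality of $Vol_d$ and $Vol_t$ on sub-cochains of $C(r)$. This is the step where I expect the most care is required, and everything else is then quantifier bookkeeping combined with the shrinkage of the index range from $r<d$ to $r<t$.
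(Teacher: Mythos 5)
Your approach is essentially the paper's: both reduce Theorem~\ref{the:one} to the degree-proportionality identity established in Lemma~\ref{lem:norm}, handle the external conductance bound directly, and then push the same equality through the internal-conductance minimum. The only real structural difference is that the paper packages the internal-conductance step as a separate Lemma~\ref{lem:color} (a pure $d$-colorful expander is a $t$-colorful expander for $1\le t<d$) and invokes it on $X^d(P_i[G])$, whereas you unpack that lemma inline by noting that both the ratio in the objective and the feasibility constraint $Vol_{\bullet}(S(r))\le Vol_{\bullet}(C(r))/2$ are unchanged under $\bullet\in\{t,d\}$; since Lemma~\ref{lem:color}'s own proof is nothing but this invariance applied under the minimum, the two are the same argument stated with a different granularity.

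The one substantive point you raise is the purity concern for the restricted complex, and it is a fair worry: Lemma~\ref{lem:color} is stated for a \emph{pure} complex, and the paper applies it to $X^d(P_i[G])$, which need not be pure even when $X^d(G)$ is (an $r$-face of $P_i$ could sit only in $d$-faces that straddle the cluster boundary). The paper does not address this. Your suggested fix — stay in the ambient complex and observe that the degree counts $deg_d(\tau)$ in the cluster definition are ambient, so purity of $X^d(G)$ (guaranteed by the no-outlier hypothesis) already gives the proportionality of $Vol_d$ and $Vol_t$ on any sub-cochain of $X_r(G[P_i])$ — is the correct way to close this, and is arguably cleaner than the paper's appeal to Lemma~\ref{lem:color}. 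So your proposal matches the intended proof in all essentials and, if anything, is more careful about a point the paper glosses over.
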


\subsection{Lower bound of testing higher-order clusterability}
\begin{theorem}
    With neighbor query oracle access, testing $d$-dimension ($k,\psi_{in},\\\psi_{out},\epsilon$)-clusterability on bounded-degree graph with neighbor query oracle has a lower bound $\Omega(\sqrt{n})$.
\end{theorem}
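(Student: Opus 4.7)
The plan is to derive the $\Omega(\sqrt{n})$ bound by a reduction from the classical expansion-testing lower bound of Goldreich and Ron~\cite{goldreich1997}, lifted through the clusterability lower bound of Czumaj et al.~\cite{czumaj2015}. For $d=1$ the result is immediate: by Theorem~\ref{the:1d}, testing $1$-dimension $(k,\psi_{in},\psi_{out},\epsilon)$-clusterability is identical to testing $(k,\psi_{in}/2,\psi_{out}/2,\epsilon)$-clusterability, which already requires $\Omega(\sqrt{n})$ neighbor queries under the same parameter regime $\psi_{out}=O(\epsilon^4/\log n\cdot\psi_{in}^2)$. A separate hard-instance construction is needed for $d\geq 2$, because graphs with little higher-order structure may make the $d$-dim conductance requirements vacuous or trivially satisfiable, so the $d=1$ instances cannot be reused as black boxes.

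For $d\geq 2$ I would fix a bounded-degree base graph $H$ on $m$ vertices whose associated complex $X^{d}(H)$ is a colorful expander in the sense of Kaufman and Mass~\cite{kaufman2017}, for instance a suitable Ramanujan complex or a random regular construction over triangle-rich link graphs. The YES instance is one such $H$ on $n$ vertices, which by construction admits a trivial $d$-dimension $(1,\psi_{in},\psi_{out})$-cluster. The NO instance is the vertex-disjoint union of $k+1$ copies of $H$ on $n/(k+1)$ vertices each. Because these copies share no edges, every $h$-partition with $h\leq k$ is forced to merge two whole copies into a single part, whose $1$-dim internal conductance is $0$; by the contrapositive of Theorem~\ref{the:one} such a part cannot possibly satisfy the $d$-dim internal conductance requirement $\psi_{in}$, so the NO instance has no $d$-dimension $(k,\psi_{in},\psi_{out})$-cluster.

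Indistinguishability then follows from the standard birthday-paradox argument: the local statistics of neighbor queries in the YES and NO instances are identically distributed until two distinct queries collide on the same vertex, an event requiring $\Omega(\sqrt{n})$ queries. The main obstacle is verifying that the NO instance is genuinely $\epsilon$-far from having a $d$-dim $(k,\psi_{in},\psi_{out})$-cluster, rather than merely failing to have one. Concretely, one must show that inserting or deleting at most $\epsilon d_{\max}n$ edges cannot create enough cross-copy $d$-simplices to merge two copies with external conductance at most $\psi_{out}$ while still preserving high internal conductance in all dimensions $0,\ldots,d-1$ simultaneously. Each edge modification introduces only $O(d_{\max}^{\,d-1})$ new $d$-simplices, so the total face-volume transferred across any cut is at most $O(\epsilon d_{\max}^{\,d}n)$, which under the gap regime $\psi_{out}=O(\epsilon^4/\log n\cdot\psi_{in}^{2})$ is quantitatively too small. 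Formalising this robustness computation in the higher-dim norms $\|\cdot\|_d$, and in particular tracking how face-volumes redistribute under $\epsilon d_{\max}n$ edits, is the technically heaviest step of the proof.
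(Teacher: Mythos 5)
Your proposal takes a genuinely different, and more careful, route than the paper's. The paper restricts to $k=1$ and reduces to the Goldreich--Ron expansion-testing lower bound: Lemma~\ref{lem:color} and Theorem~\ref{the:1d} are used to argue that a pure $d$-dimensional colorful expander is also a graph expander, and the paper then asserts that an $o(\sqrt{n})$-query tester for $d$-dimensional $(1,\psi_{in},\psi_{out})$-clusterability would give an $o(\sqrt{n})$-query expansion tester. Read literally, that implication runs in the wrong direction for a reduction: what one actually needs is that the Goldreich--Ron YES instances are $d$-dim clusterable and that their NO instances are $\epsilon$-far from $d$-dim clusterable, and neither of these is supplied by the cited lemmas. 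Indeed the Goldreich--Ron hard instances are random bounded-degree regular graphs, which are essentially triangle-free, so for $d\geq 2$ their associated complexes carry no $d$-faces and the higher-order conductance conditions degenerate. You anticipate exactly this point (``the $d=1$ instances cannot be reused as black boxes'') and instead build explicit hard instances from a bounded-degree colorful expander $H$: a single copy versus a disjoint union of $k+1$ copies, with unclusterability of the union obtained via the contrapositive of Theorem~\ref{the:one}. This is the right shape of argument and is also more general, handling all $k$ rather than only $k=1$.

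Two steps of your plan, however, remain substantive and are not yet proofs. First, you need an existence statement: a bounded-degree $d$-dimensional colorful expander whose disjoint copies are locally indistinguishable from the single-copy instance (same degree sequence and the same distribution of small-radius neighborhoods), so that the birthday-paradox indistinguishability argument actually applies; citing Ramanujan complexes is plausible, but their degree bound and local-isomorphism statistics must be pinned down for the neighbor-query model. Second, as you flag yourself, the $\epsilon$-farness of the NO instance is the technical crux: the face-volume counting sketch is the right intuition, but it has to be carried out in the $\|\cdot\|_d$ norms of Definitions 4 and 5, checked against the gap $\psi_{out}=O(\epsilon^4\psi_{in}^2/\log n)$, and shown to hold simultaneously across all dimensions $0\leq r<d$. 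The paper's own proof addresses neither point. Until those two steps are discharged, your proposal is a sound roadmap with the correct ingredients rather than a finished argument.
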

\begin{proof}
    Consider the special case when $k=1$. The origin testing problem would reduce to testing $d$-dimension $\psi$-colorful expansion, while any $\psi_{out}>0$ could be satisfied. Consider an undirected graph $G$ without outliers, which means a pure $d$-dimension simplicial complex $X$ can be constructed on it. According to Lemma~\ref{lem:color} and Theorem~\ref{the:1d}, if $X$ is a pure $d$-dimension $\psi_{in}$-colorful expander, $(X_0,X_1)$ should be a $1$-dimension $\psi_{in}$-colorful expander, which means $G$ is a normal $\frac{\psi_{in}}{2}$-expander. Goldreich and Ron~\cite{goldreich1997} proved that testing expansion on bounded degree graphs with has an $\Omega(\sqrt{n})$ lower bound. Suppose that there exists an algorithm that can test $d$-dimension $(1,\psi_{in},\psi_{out})$-clusterability in $o(\sqrt{n})$ queries, it can also answer the expansion test in $o(\sqrt{n})$ queries, which is a contradiction. To conclude, query lower bound of testing higher-order clusterability is $\Omega(\sqrt{n})$.
\end{proof}

In the next section, we would give an approach on triangle-based clusterability that could reach this lower bound.

\begin{algorithm}\label{alg:trw}
    \caption{2-dimension Random Walk (2DRW)}
    \KwIn{Initial vertex $v_0$ or edge $e_0$, length $l$.}
    \KwOut{$(v_0,v_1,\dots,v_l)$ if input is vertex; $(e_0,e_1,\dots,e_l)$ if input is edge.}
    \For{Step $t\in [0,l-1]$}{
        \If{Move from vertex $v_t$}{
            \For{Each neighbor $u_t$ of $v_t$}{
                Search all neighbors of $u_t$ and $v_t$\;
                Set the number of common neighbors $c(u_t)$ to $u_t$\;
            }
            Choose $u'_t$ with probability $\frac{c(u'_{t})}{\sum_{u_t\sim v_t}c(u_t)}$ as $v_{t+1}$ and move to it.\;
        }
        \ElseIf{Move from edge $e_t=(x_t,y_t)$}{
            Search all neighbors of $x_t$ and $y_t$\;
            \For{Each common neighbor $z_t$}{
                Put edges $(x_t,z_t)$ and $(y_t,z_t)$ into candidate set $C(e_t)$\;
            }
            Choose $e_{t+1}$ from set $C(e_t)$ uniformly at random and move to it\;
        }
    } 
\end{algorithm}

\section{Algorithm of Testing Triangle-based Clusterability}\label{sec:Algo}
\subsection{Design of Triangle-based k-cluster tester}
This section would give an example how to recognize triangle-based clusterability in sublinear-time with neighbor query oracle access. High-order random walk, which is used to catch information of network motifs, would be invoked in our algorithm. Related definition is shown as follows,
\begin{definition} (\textbf{High-order Random Walk ~\cite{kaufman2017}})
    Given a simplicial complex $X$ with $d$-dimension higher than $i$, the $i$-dimension high-order random walk $W_i$ starts from an initial $i$-dimension face $\tau_0\in X_i$. Then let $\tau_t$ be the position $W_i$ stays after $t$ steps. Choose $\tau_{t+1}$ as follows
    \begin{itemize}[itemsep=3pt, topsep=3pt, partopsep=3pt]
        \item[1)] Choose an $(i+1)$-dimension $\sigma_t\supset\tau_t$ with probability proportional to its degree $deg_d(\sigma_t)$.
        \item[2)] Uniformly choose an $i$-dimension face $\tau_{t+1}\subset\sigma_{t}, \tau_{t+1}\neq\tau_t$ at random and move to it.
    \end{itemize}
    $W_i$ stops at $\tau_t$ if no $\sigma_t$ or $\tau_{t+1}$ exists.
\end{definition}
The exact probability for moving from $\tau$ to $\tau'$, where $\tau\sim\tau'$, is as follows:
$$Pr[\tau_{t+1}=\tau'|\tau_t=\tau]=\frac{deg_d(\tau\cup\tau')}{\sum_{\tau''\sim\tau}deg_d(\tau\cup\tau'')}$$
Generally speaking, high-order random walk is an up-down Markov chain that moves on the induced $i$-subcomplex ($X_i$,$X_{i+1}$). Also, this random walk is equivalent to simple random walk on induced $i$-graph~\cite{kaufman2017} with probability distribution $\pi_0,\pi_1,\dots \in \mathbb{R}^{|X_i|}$ and $\pi_{t+1}=\pi_t\cdot\tilde{A_i}$, where $\tilde{A_i}$ is the normalized adjacency matrix of the $i$-graph. Thus, the distribution becomes stable when it equals to one of the eigenvectors of $\tilde{A_i}$. A complex with high expansion should satisfy that any high-order random walk converges rapidly to the uniform distribution.

However, neighbor query oracle cannot directly catch $i$-dimension face, so it is necessary to simulate this process by using more queries for each moving step. A 2-dimension random walk sampler in~\ref{alg:trw} is implemented for testing triangle-based clusterability. Given a vertex or edge as input, this sampler could perform the same up-down walk as that on  the induced $0$-graph and $1$-graph. Transition probability is proportional to the degree of the pass edge or triangle. Note that if no common neighbor exists, which means it is an outlier, the sampler would stop here as an endpoint.

\begin{algorithm}\label{alg:main}
    \caption{Triangle-based k-cluster tester}
    \KwIn{Query oracle of undirected bounded-$d_{max}$ graph $G(V,E)$, maximum cluster $k$, error $\epsilon$}
    \KwOut{Decision \textbf{Accept} or \textbf{Reject}}
    Sample a set $S_0$ of $s$ vertices independently and uniformly at random with query oracle\;
    For each $v\in S$, perform $m$ times of lazy $2DRW(v,l)$ and calculate the distribution $\pi_u^{l}$ of the endpoints\;
    \If{$k$-$cluster$-$test(\pi^{l},|V|,k,s,m,\theta,\delta,\epsilon)$ rejects}{
        Abort and return $Reject$\;
    } 
    \Else{
        Sample a set $S_1$ of $s$ edges independently and uniformly at random with edge sampler $S(G,\eta)$\;
        For each $e\in S$, perform $m$ times of lazy $2DRW(e,l)$ and calculate the distribution $\pi_e^{l}$ of the endpoints\;
        return $k$-$cluster$-$test(\pi^{l},|E|,k,2s,m,\theta,\delta,\epsilon)$\;
    }
\end{algorithm}

Here we briefly introduce our algorithm in~\ref{alg:main}. Similar to the approach in~\cite{czumaj2015} and~\cite{chiplunkar2018}, the algorithm embeds samples of vertices or edges into points on Euclidean spaces and cluster them based on the estimates of Euclidean distances. There are two main differences between our method and former ones. First, it is a two-step approach with $k$-$cluster$-$tester$~\ref{alg:kct} that tests whether the distribution vectors can be embedded into no more than $k$ clusters on Euclidean space. Second, simulated high-order random walks, which is promised to converge rapidly in high-dimension expander, is performed to estimate distribution of endpoints that reveals the similarities to each other. Note that \textit{lazy random walk} means with the probability $1/2$ for each step, the walk stay at the current vertex or edge. The edge sampler~\cite{eden2018} returns an edge that is n uniformly at random with bias $\eta$. Running time of the edge sampler is $O(\frac{n}{\sqrt{m}})$, which is $O(\sqrt{n})$ on bounded-degree graphs. Our algorithm would use the same configuration as that in~\cite{chiplunkar2018} that $\eta=\frac{1}{2}$ and number of edge samples would be doubled.

\begin{algorithm}\label{alg:kct}
    \caption{k-cluster-test}
    \KwIn{Distribution of endpoints $\pi^l$, maximum set size $n$, maximum cluster $k$, sample size $s$, number of each distribution $m$, parameters $\theta,\delta,\epsilon$}  
    \KwOut{Decision \textbf{Accept} or \textbf{Reject}}
    Similarity Graph $H=(\emptyset,\emptyset)$\;
    For each $u\in S$, if $l_2^2$-$norm(\pi_v^l,\theta,m,s)$ rejects, abort and return \textbf{Reject}\;
    For each pair of $u,v\in S$, if $l_2$-$distribution(\pi_u^l,\pi_v^l,m,s,\delta,\epsilon)$ accepts, then add an edge $(u,v)$ to $H$\;
    If $H$ contains more than $k$ connected components, return \textbf{Accept}; Else, return \textbf{Reject}\;
\end{algorithm} 

\subsection{Correctness and Running Time Analysis}
Now we prove the correctness of our algorithm. Since high order random walks are different from simple random walk, it is essential to make sure that distributions of endpoints converge as the input of $k$-$cluster$-$test$.

\begin{lemma}\label{lem:mix}\textbf{(Mixing Rate~\cite{kaufman2017}.)} Given an undirected graph $G(V,E)$, $\tilde{A}$ is its normalized adjacency matrix, $1=\alpha_1\geq\alpha_2\geq\cdots\geq\alpha_n\geq -1$ the eigenvalues of $\tilde{A}$ and $\alpha=\max\{|\alpha_2|,|\alpha_{|V|}|\}$. Then for any initial probability distribution $\pi_0\in R^{|V|}$ and any $t\in\mathbb{N}$,
$$\|\pi^t-\pi\|_2\leq\sqrt{\frac{d_{max}}{d_{min}}}\alpha^t,$$
where $\pi^t$ is the probability distribution after $t$ steps of the random walk, $\pi$ is the stationary distribution, $d_{max}=\max\limits_{v\in V}\{deg(v)\}$ and $d_{min}=\min\limits_{v\in V}\{deg(v)\}$.
\end{lemma}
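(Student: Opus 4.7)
The plan is to reduce the non-symmetric random-walk transition matrix $P = D^{-1}A$ to the symmetric normalized adjacency matrix $\tilde{A} = D^{-1/2}AD^{-1/2}$ via the similarity $P = D^{-1/2}\tilde{A}D^{1/2}$, so that the standard spectral decomposition of a symmetric matrix applies. The two matrices accordingly share the eigenvalues $1=\alpha_1\geq\alpha_2\geq\cdots\geq\alpha_n\geq -1$. Under the change of variables $f^t := \pi^t D^{-1/2}$, the iteration $\pi^{t+1} = \pi^t P$ becomes the symmetric iteration $f^{t+1} = f^t\tilde{A}$. The stationary distribution is $\pi_v = deg(v)/(2|E|)$, so $f^\pi_v = \sqrt{deg(v)}/(2|E|)$ is parallel to the Perron eigenvector $v_1 = (\sqrt{deg(v)}/\sqrt{2|E|})_v$ of $\tilde{A}$ with eigenvalue $1$.

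First I would verify that $f^0 - f^\pi$ is orthogonal to $v_1$. This amounts to checking $\langle f^0, v_1\rangle = \langle f^\pi, v_1\rangle$, and both inner products evaluate to $1/\sqrt{2|E|}$ once the factors of $\sqrt{deg(v)}$ cancel, using $\sum_v \pi^0_v = \sum_v \pi_v = 1$. Given this, expanding $f^0 - f^\pi$ in an orthonormal eigenbasis $\{v_i\}_{i\geq 2}$ of $v_1^\perp$ and applying $\tilde{A}^t$ coordinatewise yields
$$\|f^t - f^\pi\|_2 = \|(f^0 - f^\pi)\tilde{A}^t\|_2 \leq \alpha^t\|f^0 - f^\pi\|_2 \leq \alpha^t\|f^0\|_2,$$
where the first inequality uses that $\tilde{A}$ restricted to $v_1^\perp$ has operator norm at most $\alpha$, and the last inequality is Pythagoras applied to the orthogonal decomposition $f^0 = f^\pi + (f^0 - f^\pi)$.

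Next I would translate this contraction in $f$-coordinates back into a bound on $\|\pi^t - \pi\|_2$. Since $\pi^t - \pi = (f^t - f^\pi)D^{1/2}$, one has
$$\|\pi^t - \pi\|_2^2 = \sum_{v\in V} deg(v)\bigl(f^t_v - f^\pi_v\bigr)^2 \leq d_{max}\|f^t - f^\pi\|_2^2,$$
while on the input side $\|f^0\|_2^2 = \sum_v (\pi^0_v)^2/deg(v) \leq (1/d_{min})\sum_v (\pi^0_v)^2 \leq 1/d_{min}$, using $\pi^0_v\leq 1$ and $\sum_v \pi^0_v = 1$. Chaining these three estimates gives exactly $\|\pi^t-\pi\|_2 \leq \sqrt{d_{max}/d_{min}}\,\alpha^t$.

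The main obstacle I anticipate is the bookkeeping around the similarity $P = D^{-1/2}\tilde{A}D^{1/2}$: one must track carefully whether probability vectors act as row vectors on the right of $P$ or as column vectors on the left, and whether the rescaling by $D^{\pm 1/2}$ is applied from the left or the right at each step, so that the degree ratio arrives as $\sqrt{d_{max}/d_{min}}$ rather than some other power. Once the change of basis is set up consistently with the row-vector convention used in Lemma~\ref{lem:mix}, the remainder is a routine orthogonal-projection estimate on $v_1^\perp$.
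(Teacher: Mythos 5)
The paper does not prove Lemma~\ref{lem:mix}; it imports the statement verbatim from Kaufman and Mass~\cite{kaufman2017} and treats it as a black box, so there is no in-paper argument to compare yours against. Your proof is a correct, self-contained derivation of the standard spectral mixing-rate bound: the similarity $P = D^{-1/2}\tilde{A}D^{1/2}$, the orthogonality of $f^0-f^\pi$ to the Perron eigenvector $v_1$, the operator-norm contraction by $\alpha$ on $v_1^\perp$, and the two rescalings by $D^{\pm 1/2}$ that produce the factor $\sqrt{d_{max}/d_{min}}$ are all set up consistently under the row-vector convention, and the closing estimates $\sum_v(\pi^0_v)^2 \le 1$ and $\|\pi^t-\pi\|_2^2 \le d_{max}\|f^t-f^\pi\|_2^2$ are exactly the right ones. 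The only implicit hypothesis worth flagging is that $G$ must be connected (so the stationary distribution and the simple top eigenvalue are unique), which the lemma statement also leaves tacit.
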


We prove that a lazy $2$-dimension random walk with 11 times number of the original steps is enough for $k$-$cluster$-$test$.
\begin{lemma}~\label{lem:colormix} Given an undirected graph $G(V,E)$, mixing rate of lazy $2$-dimension random walk is $\mu'$, for any initial probability distribution $\pi_0\in R^{|V|}$ and any $t\in\mathbb{N}$,
$$\|\pi^t-\pi\|_2\leq\sqrt{\frac{d_{max}}{d_{min}}}\mu'^{11t},$$
where $\pi^t$ is the probability distribution after $t$ steps of the random walk, $\pi$ is the stationary distribution, $d_{max}=\max\limits_{v\in V}\{deg(v)\}$ and $d_{min}=\min\limits_{v\in V}\{deg(v)\}$. (see Appendix~\ref{lem5})
\end{lemma}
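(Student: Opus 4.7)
The plan is to reduce the analysis of the lazy $2$-dimension random walk implemented in Algorithm~\ref{alg:trw} to a weighted simple random walk on an induced $1$-graph, apply the classical spectral bound of Lemma~\ref{lem:mix} to that walk, and then absorb the overhead incurred by having to \emph{simulate} each abstract transition through neighbor queries into a multiplicative factor in the exponent.

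First I would formalise the correspondence between the abstract high-order random walk and a weighted simple random walk on the induced $1$-graph. The transition rule stated after the definition of the high-order walk, namely $\Pr[\tau_{t+1}=\tau'\mid\tau_t=\tau]=\deg_d(\tau\cup\tau')/\sum_{\tau''\sim\tau}\deg_d(\tau\cup\tau'')$, is exactly a simple random walk on the induced $1$-graph $G_1$ with edge weights $\deg_d$. Hence the transition matrix coincides with the normalised adjacency matrix $\tilde A_1$ of $G_1$, whose eigenvalues lie in $[-1,1]$ with Perron value $1$. Passing to the lazy variant sends each eigenvalue $\alpha_i$ to $(1+\alpha_i)/2\in[0,1]$, so the second-largest-in-magnitude eigenvalue is non-negative and equals the mixing rate $\mu'$ in the statement. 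Lemma~\ref{lem:mix} then immediately yields $\|\pi^{s}-\pi\|_2\le\sqrt{d_{\max}/d_{\min}}\,\mu'^{s}$ for $s$ transitions of the \emph{abstract} lazy walk.

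Next I would unfold a single outer-loop iteration of Algorithm~\ref{alg:trw} and count how many elementary transitions of that abstract walk it realises. Because the algorithm has only neighbor-oracle access and cannot address a $2$-face directly, each simulated move factors through an ``up'' step (querying neighbors of the current face to discover an incident triangle) and a ``down'' step (selecting a new sub-face of that triangle), with an additional laziness coin and a common-neighbor re-weighting on each branch (vertex branch versus edge branch of Algorithm~\ref{alg:trw}). By explicit bookkeeping I would show that the Markov kernel enacted by one outer iteration coincides with the $11$th power of the abstract lazy kernel on $G_1$, so $t$ outer iterations correspond to $11t$ abstract transitions. Substituting $11t$ for $s$ in the bound derived above gives exactly
\[
\|\pi^{t}-\pi\|_2\le\sqrt{d_{\max}/d_{\min}}\,\mu'^{11t},
\]
which is the desired inequality.

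The main obstacle is the third step: pinning the exponent at precisely $11$. Both branches of Algorithm~\ref{alg:trw} must be analysed separately, since the vertex branch and the edge branch enact slightly different kernels, and one must verify that the common-neighbor re-weighting together with the laziness coin produces the same spectral quantity $\mu'$ in both cases while yielding the advertised constant in the exponent. Once this combinatorial accounting is settled, the remainder is a direct substitution into Lemma~\ref{lem:mix} applied to $\tilde A_1$ of the induced $1$-graph, together with the observation that laziness guarantees $\mu'\in[0,1)$ so that $\mu'^{11t}$ is a decreasing function of $t$.
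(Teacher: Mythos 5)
Your plan correctly identifies the induced $1$-graph, the normalised adjacency matrix $\tilde A_1$, and the role of Lemma~\ref{lem:mix} as the starting point, but your explanation of where the exponent $11$ comes from is wrong, and the proof built on that explanation would collapse. You claim that one outer-loop iteration of Algorithm~\ref{alg:trw} ``coincides with the $11$th power of the abstract lazy kernel,'' i.e.\ that the factor $11$ is an implementation overhead from simulating a step through neighbor queries. That is not what the algorithm does: each iteration of the outer loop in Algorithm~\ref{alg:trw} performs \emph{exactly one} up-down transition of the abstract walk (select an incident triangle, move to one of its sub-faces), so the kernel it enacts is the first power of the lazy $\tilde A_1$, not the eleventh. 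If your ``bookkeeping'' were pressed to completion it would show the kernels are equal, not that one is the $11$th power of the other, and the inequality with $\mu'^{11t}$ would not follow.

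The actual source of the $11$ in the paper's proof is a \emph{spectral gap comparison}, not a simulation count. The paper invokes the Kaufman--Mass mixing bound (Lemma~\ref{lem:rate}): on a $d$-dimensional $\epsilon$-colorful expander the high-order walks mix at rate $\mu = 1 - \epsilon^2/\bigl(2(d+1)^2\bigr)$, so for $d=2$ the lazy $2$DRW has $\mu' = (1+\mu)/2 = 1 - \epsilon^2/36$. This is \emph{worse} than the lazy one-dimensional walk's rate $1-\epsilon^2/4$ on an ordinary $\epsilon$-expander. The paper then proves the elementary inequality $1-\epsilon^2/4 > \bigl(1-\epsilon^2/36\bigr)^{11}$ for all $\epsilon\in(0,1]$ (Lemma~\ref{lem:ine11}, by a short calculus argument), which says that running the slower $2$DRW for $11$ times as many steps achieves at least the convergence that the $1$D analysis requires. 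Your proposal omits Lemma~\ref{lem:rate} entirely and therefore has no way to quantify the penalty incurred by moving to dimension $2$, which is precisely the content of the lemma; the substitution of $11t$ for $t$ needs that spectral comparison, not a claim about the algorithm's inner loop.
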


Then Theorem~\ref{the:cor} can be deduced by the next lemma. In convenience, we set $\psi_{in}=\psi$ and $\psi_{out}=O(\epsilon^4\psi^2/\log(n))$. We say a given graph is \textit{$2$-dimension $(k,\psi)$-clusterable} if there exists a $2$-dimension $(k,\psi_{in},\psi_{out})$-cluster on it.
\begin{lemma}~\label{lem:kct} Given the same constants $c_{3.1},c_{4.2},c_{4.3}$ as those in~\cite{czumaj2015}, set $s=\frac{1536k\ln(18(k+1))}{\epsilon^2}, l=\frac{11max\{c_{4.2},c_{4.3}\}k^4\log(n)}{\psi^2}, m=384c_{3.1}s\sqrt{skn}\ln s, \theta=\frac{288sk}{n},\\ \delta=\frac{1}{24s^2}$. $k$-$cluster$-$test$ accepts $2$-dimension ($k,\psi$)-clusterable graph and rejects every graph $\epsilon$-far from being $2$-dimension ($k,\psi$)-clusterable with probability at least $\frac{5}{6}$. (see Appendix~\ref{lem6})
\end{lemma}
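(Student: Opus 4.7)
The plan is to mirror the analysis structure of Czumaj et al.~\cite{czumaj2015} and Chiplunkar et al.~\cite{chiplunkar2018}, but carried out over the induced $0$-graph and $1$-graph of the simplicial complex $X^{2}(G)$ rather than over $G$ itself. The key translation is that a lazy $2DRW$ started at a vertex (resp.\ edge) is, by Algorithm~\ref{alg:trw} and the remark after the $2DRW$ definition, an honest lazy simple random walk on the induced $0$-graph (resp.\ $1$-graph). Consequently, inside each cluster we can apply Lemma~\ref{lem:colormix} with the bound on mixing time supplied by the internal conductance $\psi_{in}=\psi$, while the external conductance $\psi_{out}$ controls the probability mass that leaks between clusters during $l$ steps.

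First I would handle the \emph{completeness} direction. Suppose $G$ admits a $2$-dimension $(k,\psi)$-cluster with parts $P_1,\dots,P_h$. By Theorem~\ref{the:one}, each $P_i$ is simultaneously a good cluster at dimensions $0$ and $1$, so both the induced $0$-subgraph and $1$-subgraph of each $P_i$ are $\psi$-expanders up to constants. Plugging $l=\tfrac{11\max\{c_{4.2},c_{4.3}\}k^4\log n}{\psi^2}$ into Lemma~\ref{lem:colormix} gives that any endpoint distribution $\pi^{l}_u$ starting at a $0$- or $1$-face $u$ inside $P_i$ is $n^{-\Theta(k^{4})}$-close in $\ell_2$ to the stationary distribution $\pi^{P_i}$ of the walk restricted to $P_i$; meanwhile the $\psi_{out}=O(\epsilon^{4}\psi^{2}/\log n)$ bound guarantees that the expected escape probability over $l$ steps is at most $l\cdot\psi_{out}=O(\epsilon^{4}k^{4})$, hence negligible for the chosen $\epsilon$. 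This yields (i) an upper bound on $\|\pi^{l}_u\|_{2}^{2}$ of roughly $1/|P_i|\le 2k/n$, so the $\ell_2^2$-norm tester with threshold $\theta=288sk/n$ passes with high probability, and (ii) for any two samples $u,v$ in the same $P_i$, $\|\pi^{l}_u-\pi^{l}_v\|_2$ is much smaller than the test threshold, so every intra-cluster pair adds an edge to $H$. A union bound over the $\binom{s}{2}$ tests and a Chernoff bound on the event that $S$ intersects at most $k$ distinct clusters give that $H$ has at most $k$ components with probability at least $5/6$.

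Next I would handle the \emph{soundness} direction. Assume $G$ is $\epsilon$-far from being $2$-dimension $(k,\psi)$-clusterable. Following the contrapositive scheme of~\cite{czumaj2015}, if $k$-$cluster$-$test$ accepts then $H$ has at most $k$ components and each sampled starting face passed the $\ell_2^2$-norm test. From the components of $H$ I would extract, via the standard spectral-partitioning argument applied to the induced $0$- and $1$-graphs, a partition $\mathbb{P}$ of the sampled faces into at most $k$ groups whose empirical volumes are representative; a concentration argument (Chernoff on $s=\Theta(k\epsilon^{-2}\log k)$ samples) lifts this to a partition of $V$, and of $E$ in the second phase, whose internal conductances are $\Omega(\psi)$ and whose external conductances are $O(\psi_{out})$, contradicting $\epsilon$-farness after at most $\epsilon d_{\max}n$ edge modifications. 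The two-phase structure of Algorithm~\ref{alg:main} is what lets us apply the same argument at both the vertex level ($0$-faces) and the edge level ($1$-faces), exactly as required by Definition~\ref{def:par}.

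The main obstacle I foresee is the last step of soundness: translating a good clustering of the $1$-graph back into a good $2$-dimension cluster of $G$ while controlling the error budget. The $1$-graph can have substantially more vertices (up to $O(n d_{\max})$) than $G$, so one must be careful that $\Omega(\sqrt{n})$ samples on $G$ induce enough samples on edges; this is why the edge sampler $S(G,\eta)$ with $\eta=1/2$ and doubled sample size is invoked, and one must verify that the bias it introduces is absorbed into the constants of~\cite{czumaj2015}. A secondary technical point is justifying that the lazy factor of $11$ in Lemma~\ref{lem:colormix} suffices to keep the mixing estimates within the same constants $c_{4.2},c_{4.3}$ used by Czumaj et al.; this amounts to checking that the spectral gap of the lazy up-down walk on $X^{2}(G)$ is at least a constant fraction of the gap of the lazy simple walk, which follows from Kaufman--Mass~\cite{kaufman2017} applied to the induced $i$-graphs.
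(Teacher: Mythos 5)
Your proposal takes essentially the same approach as the paper: identify the $2DRW$ with lazy simple random walks on the induced $0$- and $1$-graphs, plug the $\psi$-dependent mixing bound (Lemma~\ref{lem:colormix}) into Czumaj et al.'s analysis, and port their completeness and soundness lemmas with adjusted parameters. The paper is organized a bit differently --- it explicitly states the $\ell_2^2$-norm and $\ell_2$-distribution tester primitives, then introduces the abstractions of a \emph{good vertex/edge} (for completeness, Lemmas~\ref{lem:accept}--\ref{lem:goodtest}) and a \emph{representative sample set} (for soundness, Lemmas~\ref{lem:reject}--\ref{lem:bad}), and asserts these hold ``with only values of parameters modified'' from Lemmas 4.5--4.10 of~\cite{czumaj2015} --- whereas you sketch the underlying probabilistic reasoning (leakage bound $l\cdot\psi_{out}$, Chernoff on number of hit clusters, contrapositive spectral extraction) more directly. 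The two presentations are logically equivalent; yours is the ``unpacked'' form of what the paper delegates to Czumaj et al.'s sub-lemmas. The obstacles you flag (the $1$-graph blow-up to $O(n d_{\max})$ vertices, handled in the algorithm by passing $|E|$ and using the biased edge sampler; and whether the lazy factor of $11$ preserves the constants $c_{4.2},c_{4.3}$, handled by Lemma~\ref{lem:ine11}) are exactly the points the paper resolves elsewhere rather than inside this proof, so raising them is appropriate but they are not gaps specific to your argument.
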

Then Theorem $\ref{the:cor}$ holds since our algorithm invoke $k$-$cluster$-$test$ twice.

\begin{theorem}\label{the:cor} (\textbf{Correctness}.) With proper setting of the parameters, algorithm~\ref{alg:main} can accept every $2$-dimension $(k,\psi_{in},\psi_{out})$-clusterable graph with probability at least $\frac{2}{3}$ and reject every graph $\epsilon$-far from being $2$-dimension $(k,\psi_{in},\psi_{out})$-clusterable with probability at least $\frac{2}{3}$.
\end{theorem}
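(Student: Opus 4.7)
The plan is to reduce Theorem~\ref{the:cor} to Lemma~\ref{lem:kct} via a union bound over the two invocations of the \emph{k-cluster-test} subroutine inside Algorithm~\ref{alg:main}. The first invocation is seeded from a uniform sample of $s$ vertices and uses length-$l$ lazy 2DRW to produce distribution vectors $\pi_v^l$; the second is seeded from $2s$ edges returned by the $S(G,\tfrac12)$ sampler of~\cite{eden2018} and produces vectors $\pi_e^l$. With the parameter choices $s,l,m,\theta,\delta$ prescribed in Lemma~\ref{lem:kct}, each single invocation is guaranteed to accept a $2$-dimension $(k,\psi)$-clusterable input and reject one that is $\epsilon$-far from being such with probability at least $5/6$.

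For the completeness direction, suppose $G$ admits a $2$-dimension $(k,\psi_{in},\psi_{out})$-cluster $\mathbb{P}$. By Theorem~\ref{the:one}, $\mathbb{P}$ is simultaneously a $1$-dimension $(k,\psi_{in},\psi_{out})$-cluster, so the cluster structure is visible both at the vertex level (the $0$-graph, which is $G$ itself) and at the triangle/edge level (the induced $1$-graph). Lemmas~\ref{lem:mix} and~\ref{lem:colormix} certify that the length-$l$ lazy 2DRW on either induced $i$-graph converges close enough to the stationary distribution for the $\ell_2^2$-norm and $\ell_2$-distribution subtests inside \emph{k-cluster-test} to behave as analyzed. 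Applying Lemma~\ref{lem:kct} to each invocation separately and taking a union bound over their failure events yields overall acceptance probability at least $1 - \tfrac16 - \tfrac16 = \tfrac23$.

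For the soundness direction, suppose $G$ is $\epsilon$-far from being $2$-dimension $(k,\psi_{in},\psi_{out})$-clusterable. Algorithm~\ref{alg:main} returns Reject as soon as \emph{either} invocation rejects, so it suffices that even one of the two detects the farness. Lemma~\ref{lem:kct} applied to the first invocation already rejects with probability at least $5/6 \ge 2/3$; the second invocation only reinforces this conclusion. No independence assumption between the two invocations is needed because the argument only requires a lower bound on one rejection event.

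The main obstacle, I expect, lies not in the bookkeeping union bound but in verifying that the second call to \emph{k-cluster-test}, run with set size $|E|$ and $2s$ biased samples from the edge sampler, still meets the hypotheses of Lemma~\ref{lem:kct}. Concretely, one must check that doubling the sample count exactly compensates for the factor-$\tfrac12$ bias of $S(G,\tfrac12)$ so that the produced vectors behave like $s$ uniform samples for the $\ell_2^2$-norm and $\ell_2$-distribution subtests, and that the mixing rate of Lemma~\ref{lem:colormix} on the induced $1$-graph of triangles indeed matches the length $l$ prescribed in Lemma~\ref{lem:kct}. Once these two calibrations are in place, the union bound closes the proof.
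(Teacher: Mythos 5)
Your proposal takes essentially the same route as the paper: the paper's entire proof of Theorem~\ref{the:cor} is the single sentence ``Then Theorem~\ref{the:cor} holds since our algorithm invoke $k$-$cluster$-$test$ twice,'' which is exactly the union-bound-over-two-invocations argument you spell out. Your completeness direction (both invocations must accept; Lemma~\ref{lem:kct} gives each a failure probability $\leq 1/6$; union bound yields $\geq 2/3$) matches the intended reading of the paper.

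One caution on your soundness direction. You write that ``Lemma~\ref{lem:kct} applied to the first invocation already rejects with probability at least $5/6$,'' treating the second (edge-seeded) invocation as redundant. Whether this holds depends on reading Lemma~\ref{lem:kct} literally as applying to \emph{either} invocation in isolation. But the reason Algorithm~\ref{alg:main} makes two calls---one seeded from vertices (which probes the $0$-graph) and one seeded from edges (which probes the $1$-graph of triangles)---is precisely that the obstruction to $2$-dimension clusterability may be invisible at the vertex level: a graph can have an excellent $k$-cluster at the $0$-level while being far from clusterable at the triangle level, and then the first $k$-cluster-test would accept while only the second detects farness. So the clean statement of soundness should be that \emph{at least one} of the two invocations rejects with probability $\geq 5/6$, not that the first one alone does. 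Since the algorithm outputs Reject as soon as either call rejects, your final bound $\geq 2/3$ is unaffected, but the reasoning should not claim the first call is sufficient on its own. Apart from this, your flagged calibration concerns (the factor-$\tfrac12$ bias of the edge sampler and the $11l$ mixing-time inflation from Lemma~\ref{lem:colormix}) are genuine technical loose ends that the paper dispatches by reference to \cite{chiplunkar2018} and \cite{czumaj2015}, so you are right to be cautious there; your awareness of them actually exceeds the explicit care in the paper's one-line proof.
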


\begin{theorem}\label{lem:run} (\textbf{Running Time}.)
    With proper setting of parameters, triangle-based k-cluster tester runs in time $O(\frac{\sqrt{n}k^7d_{max}^3(\ln{k})^{7/2}\ln{1/\epsilon}\ln{n}}{\psi_{in}^2\epsilon^5}).$
\end{theorem}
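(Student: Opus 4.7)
The plan is to add up the cost of the three pieces of Algorithm~\ref{alg:main} — (i) the initial random walks from vertex samples, (ii) the edge sampling, (iii) the random walks from edge samples — together with the cost of the two invocations of $k$-$cluster$-$test$, and then substitute the parameter settings fixed in Lemma~\ref{lem:kct}. Since the two invocations of $k$-$cluster$-$test$ are entirely analogous and differ only by a constant factor on $s$, it suffices to bound one pass and double.

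First I would analyze the per-step cost of Algorithm~\ref{alg:trw}. For a step starting at a vertex $v_t$ of degree at most $d_{max}$, the algorithm enumerates each of the $O(d_{max})$ neighbors $u_t$ and, for each, retrieves the neighbor list of $u_t$ through $O(d_{max})$ oracle calls and compares it against the neighbor list of $v_t$ in $O(d_{max}^2)$ time; the per-step cost is therefore $O(d_{max}^3)$. The edge case is analogous: retrieving neighbor lists of $x_t$ and $y_t$ costs $O(d_{max})$ queries, finding common neighbors costs $O(d_{max}^2)$ operations, and building $C(e_t)$ and sampling from it is dominated by the same bound. Thus a single walk of length $l$ uses $O(l\, d_{max}^3)$ time, and the total walk cost for all samples across one invocation is $O(s\cdot m\cdot l\cdot d_{max}^3)$.

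Next I would substitute the values $s=\Theta(k\ln k/\epsilon^2)$, $m=\Theta(s^{3/2}\sqrt{kn}\ln s)$, and $l=\Theta(k^4\log n/\psi_{in}^2)$ from Lemma~\ref{lem:kct}. A direct computation gives
\begin{equation*}
s\cdot m\cdot l\cdot d_{max}^3 \;=\; \Theta\!\left(s^{5/2}\sqrt{kn}\,\ln s\cdot l\cdot d_{max}^3\right) \;=\; \Theta\!\left(\frac{\sqrt{n}\,k^7 d_{max}^3 (\ln k)^{5/2}\ln s\,\log n}{\psi_{in}^2\,\epsilon^5}\right),
\end{equation*}
and using $\ln s=O(\ln k+\ln(1/\epsilon))$ this is absorbed into $O((\ln k)^{7/2}\ln(1/\epsilon))$, matching the target bound. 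I would then check that the other contributions are dominated: the edge sampler of~\cite{eden2018} with bias $\eta=1/2$ costs $O(\sqrt{n})$ per sample and $O(s\sqrt{n})$ in total, which is swallowed by the factor $m$ above; the $l_2^2$-norm test inside $k$-$cluster$-$test$ is run $s$ times at cost dominated by the walk cost; and the $l_2$-distribution test is run on $O(s^2)$ pairs at cost polynomial in $m$ and $1/\delta$ that, by the same parameter choices as in~\cite{czumaj2015,chiplunkar2018}, is strictly smaller than $s\cdot m\cdot l\cdot d_{max}^3$.

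The main obstacle I expect is the per-step bookkeeping for the simulated 2DRW: since the neighbor oracle cannot return a 2-dimensional face directly, each transition of the random walk has to be realized by a common-neighbor computation, and one must verify that the transition probabilities produced by Algorithm~\ref{alg:trw} are exactly the up-down chain probabilities used in Lemmas~\ref{lem:mix} and~\ref{lem:colormix}; otherwise the mixing bound that justifies the choice of $l$ would not apply and the running-time statement, though arithmetically correct, would not combine with Theorem~\ref{the:cor}. Once this correspondence is explicit, the remainder is a mechanical substitution of parameters and a comparison of the three additive cost terms.
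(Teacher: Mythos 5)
Your decomposition and bookkeeping follow the paper's proof essentially term for term: you separate the cost into sampling, $s\cdot m$ walks of length $l$ with an $O(d_{max}^3)$ per-step simulation of the 2DRW, plus the $l_2^2$-norm and $l_2$-distribution tests, arrive at $O(s\sqrt{n}+d_{max}^3 sml + sm + s^2 m)$, and identify $d_{max}^3 sml$ as the dominant term, which after substituting the Lemma~\ref{lem:kct} parameters gives the stated bound. The only addition in your write-up is an explicit justification of the $O(d_{max}^3)$ per-step cost (which the paper just asserts) and the sanity remark that the simulated transition kernel must equal the up-down chain so the choice of $l$ is actually justified — a reasonable caveat, but one that belongs to the correctness argument (Theorem~\ref{the:cor}), not to this running-time count.
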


\begin{proof} First, the algorithm generates a sample set of $s$ vertices with query oracle and $s$ edges with edge sampler. Next, the algorithm performs $m$ random walks with step $l$ for all $s$ samples, while time for each step is $O(d_{max}^3)$. Then, the algorithm invoke $l_2^2$-$norm$ tester, which has running time $O(m)$, for each sample in $S$. Finally, the algorithm invoke $l_2$-$distribution$ tester with running time $O(m)$ for each pair of samples in $S$. To conclude, the total running time of the algorithm is $O(s\sqrt{n}+d_{max}^3sml+sm+s^2m)=O(\frac{\sqrt{n}k^7d_{max}^3(\ln{k})^{7/2}\ln{1/\epsilon}\ln{n}}{\psi_{in}^2\epsilon^5}).$
\end{proof}

\section{Summary and Future Work}\label{sec:Fut}
In this work, a problem of testing higher order clusterability is proposed based on the new definition of high-dimension cluster. Besides, an algorithm for testing triangle-based clusterability, which reaches the proved lower bound, is designed. In the future, we would seek the lower bound when the logarithmic gap between normalized internal and external conductance is eliminated. We also plan to develop new algorithms for testing clique-based clusterability with more powerful query and sample oracles.

%
%
\bibliographystyle{splncs04}
\bibliography{samplepaper}

\newpage

\begin{subappendices}
	\renewcommand{\thesection}{\Alph{section}}
	\section{Proofs}
	\subsection{Proof of Lemma \ref{lem:norm} (See Page~\pageref{lem:norm})}\label{lem2}
    \begin{proof} 
        Let $G_i(V_i,E_i)$ be the induced $i$-graph of $X$. For any $i$-face $\tau\in X_i$ and its corresponding $V(\tau)\in V_i$, it holds that
         $$\begin{aligned}
             deg_d(\tau) &=\frac{1}{d-i}\sum\limits_{\substack{\tau\subset\sigma\\ dim(\tau)=\dim(\sigma)-1}}deg_d(\tau)\\
             &=\frac{1}{(d-i)(i+1)}\sum\limits_{\substack{\tau\subset\sigma\\ dim(\tau)=\dim(\sigma)-1}}deg_d(\sigma\cup\sigma')\\
             &=\frac{1}{(d-i)(i+1)}deg(V(\tau))\\
             &=\frac{t-i}{d-i}\cdot\frac{1}{(t-i)(i+1)}deg(V(\tau))\\
             &=\frac{t-i}{d-i}deg_{t}(\tau).
         \end{aligned}$$
         Let $C(i)$ be an $i$-cochain that satisfies $\emptyset\neq C(i)\subseteq X_i$, the norm satisfies
         \begin{equation}\label{equ:1}
         \|C(i)\|_d=\frac{\sum_{\tau\in C(i)}deg_d(\tau)}{\sum_{\tau'\in X_i}deg_d(\tau')}=\frac{\sum_{\tau\in C(i)}deg_{t}(\tau)\cdot\frac{t-i}{d-i}}{\sum_{\tau'\in X_i}deg_{t}(\tau')\cdot\frac{t-i}{d-i}}=\|C(i)\|_{t}.
         \end{equation}
         Similarly, the $(i+1)$-cochain $\mathbb{F}(C(i),X_i\backslash C(i))$ satisfies 
         \begin{equation}\label{equ:2}
         \|\mathbb{F}(C(i),X_i\backslash C(i))\|_d=\|\mathbb{F}(C(i),X_i\backslash C(i))\|_t.
         \end{equation}
         Combining Equation~\ref{equ:1} and \ref{equ:2}, 
         $$\frac{\|\mathbb{F}(C(i),X_i\backslash C(i))\|_t}{\|C(i)\|_t}=\frac{\|\mathbb{F}(C(i),X_i\backslash C(i))\|_d}{\|C(i)\|_d},$$
         which finishes the proof.
     \end{proof}
	
	\subsection{Proof of Lemma \ref{lem:color} (See Page~\pageref{lem:color})}\label{lem3}
    \begin{proof}
        For any $r$-cochain $C(r)$ that satisfies $\emptyset\neq C(r)\subseteq X_r$ and $Vol_t(C(r))\leq Vol_t(X_r)/2$. By Equation~\ref{equ:2}, $\|C(r)\|_t=\|C(r)\|_d<1/2$, thus $Vol_d(C(r)) \\\leq Vol_d(X_r/C(r))$. According to the definition of $d$-dimension $\epsilon$-colorful expander, 
        \begin{equation}\label{equ:3}
            \begin{aligned}
                \Psi_t(C(r)[X_{r+1}])&=\frac{\|\mathbb{F}(C(r),X_r\backslash C(r))\|_t}{\|C(r)\|_t}\\
                &=\frac{\|\mathbb{F}(C(r),X_r\backslash C(r))\|_d}{\min\{\|C(r)\|_d,\|X_r/C(r)\|_d\}}\geq\epsilon,
            \end{aligned}
        \end{equation}
        which finishes the proof.
    \end{proof}

    \subsection{Proof of Theorem \ref{the:one} (See Page~\pageref{the:one})}\label{the2}
\begin{proof}
    Construct the $d$-dimensional simplicial complex $X^d(G)=\{X_0(G),X_1(G)\\, \dots, X_{d}(G)\}$ by using Lemma~\ref{lem:raise}. For arbitrary integer $i\in [1,h], r\in[1,t]$. With the help of Lemma~\ref{lem:norm}, the bound on t-dimension normalized external conductance is   
    \begin{equation}\label{equ:4}
        \begin{aligned}
            \Psi_{t,X_r(G)}(X_r(P[G])) &=\frac{\|\mathbb{F}(X_r(P[G]), X_r(G)/X_r(P[G]))\|_t}{\min\{\|X_r(P[G]\|_t,\|X_r(G)/X_r(P[G])\|_t\}}\\
            &=\frac{\|\mathbb{F}(X_r(P[G]), X_r(G)/X_r(P[G]))\|_d}{\min\{\|X_r(P[G]\|_d,\|X_r(G)/X_r(P[G])\|_d\}}\\
            &\leq\phi_{out}.
        \end{aligned}
    \end{equation}
    According to the definition of $d$-dimension $(k,\psi_{in},\psi_out)$-cluster, $X^d(P_i[G])$ is a $d$-dimension colorful expander. Then using Lemma~\ref{lem:color}, the bound on $t$-dimension normalized internal conductance for any $P_i$ with $i\in[1,h]$ is
    \begin{equation}\label{equ:5}
        \begin{aligned}
            \Psi_t(X_r(P_i[G])[X_{r+1}(P_i[G])])&\\
            &=\min\limits_{\substack{\emptyset\neq C(r)\subset X_r(P_i[G])\\ Vol_d(C(r))\leq Vol_d(X_r(P_i[G]))/2}}\Psi_t(C(r)[X_{r+1}(P_i[G])])\\
            &\geq\phi_{in}.
        \end{aligned}
    \end{equation}
    Using Inequality~\ref{equ:4} and \ref{equ:5}, $\mathbb{P}$ is a $t$-dimension $(k,\psi_{in},\psi_{out})$-partiton, which finishes the proof.
\end{proof}

\subsection{Proof of Lemma \ref{lem:colormix} (See Page~\pageref{lem:colormix})}\label{lem5}
The following lemma gives a mixing rate for 2-dimension random walks on $\epsilon$-colorful expander.
\begin{proof}
    \begin{lemma}~\label{lem:rate} \textbf{(Mixing Rate on colorful expansion~\cite{kaufman2017}.)} Let $X$ be a $d$-dimension $\epsilon$-colorful expander, $d>1$. Then all high order random walks on $X$ are $\mu$-rapidly mixing for 
        $$\mu=1-\frac{\epsilon^2}{2(d+1)^2}$$
        where $\mu\leq\max\{|\alpha_2|,|\alpha_{|V|}|\}$ with the spectrum of the normalized adjacency matrix $\tilde{A}$ on each induced i-graphs. 
        \end{lemma}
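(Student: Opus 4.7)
The plan is to translate the $\epsilon$-colorful expansion of $X$ into a Cheeger-type spectral-gap bound on each induced $i$-graph $G_i = (V_i, E_i)$, for every $0 \leq i < d$, and thereby control $\max\{|\alpha_2|, |\alpha_{|V_i|}|\}$ uniformly by $1 - \epsilon^2/(2(d+1)^2)$. Since every high-order random walk $W_i$ on $(X_i, X_{i+1})$ is equivalent to the simple weighted random walk on $G_i$ whose transition matrix is the normalized adjacency matrix $\tilde{A}_i$, the mixing-rate claim reduces to a uniform spectral bound over $i$.

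First I would fix $i$ and any nonempty $T \subsetneq V_i$ with $\|C(i)\|_d \leq 1/2$, where $C(i)$ is the $i$-cochain identified with $T$. The $\tilde{A}_i$-weighted cut between $T$ and $V_i \setminus T$ equals $Vol_d(\mathbb{F}(C(i), X_i \setminus C(i)))$ up to a combinatorial multiplicity that counts how many ordered pairs of $i$-subfaces each $(i+1)$-face $\sigma$ contributes (at most $(i+1)(i+2)$), and the stationary mass of $T$ equals $Vol_d(C(i))$ up to the same multiplicity. Hence the conductance of $G_i$ satisfies $\phi(G_i) \geq \epsilon/(i+2)$, which is at least $\epsilon/(d+1)$ because $i \leq d-1$.

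Next I would apply the standard Cheeger inequality for reversible Markov chains to $\tilde{A}_i$, yielding $\alpha_2 \leq 1 - \phi(G_i)^2/2 \leq 1 - \epsilon^2/(2(d+1)^2)$. For the smallest eigenvalue $\alpha_{|V_i|}$, I would invoke the two-sided (bipartite) Cheeger inequality: if $1 + \alpha_{|V_i|}$ were smaller than the claimed gap, $G_i$ would admit a near-bipartition, which when pulled back to $X_i$ would produce a cochain violating the $\epsilon$-colorful expansion at level $i$. Combining both sides gives $\max\{|\alpha_2|, |\alpha_{|V_i|}|\} \leq 1 - \epsilon^2/(2(d+1)^2)$, and taking $\mu$ to be this common upper bound yields the lemma for every induced $i$-graph simultaneously.

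The main obstacle is pinning down the combinatorial scaling factor that translates the cut weight on $G_i$ into the normalized volume $\|\mathbb{F}(C(i), X_i \setminus C(i))\|_d$: this requires carefully tracking how many $i$-subfaces each $(i+1)$-face contributes, relating $deg_d$ on $i$-faces to the stationary distribution of $\tilde{A}_i$, and using Lemma~\ref{lem:norm} to confirm that no additional dimension-dependent loss enters beyond the $(i+2)$ factor, so that the worst constant across $0 \leq i < d$ is exactly $d+1$. A secondary subtlety is the bipartite side, where constructing an explicit witness cochain from a near-$(-1)$ eigenvector is the step that draws most heavily on the Kaufman-Mass machinery and on the fact that $\epsilon$-colorful expansion is assumed at every level.
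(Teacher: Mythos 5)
First, note that the paper itself does not prove Lemma~\ref{lem:rate}: it is imported verbatim from Kaufman and Mass \cite{kaufman2017}, and the text immediately after it passes to a \emph{lazy} walk precisely ``to keep the absolute value of the second eigenvalue larger than that of the last eigenvalue.'' So there is no in-paper argument to match; yours is a reconstruction of the cited result. For the top of the spectrum your plan is sound and the constant does work out: each $\sigma\in\mathbb{F}(C(i),X_i\setminus C(i))$ with $j$ of its $i+2$ sub-faces in $C(i)$ contributes $j(i+2-j)\ge i+1$ crossing pairs of weight $deg_d(\sigma)$ each, the weighted degree of $V(\tau)$ in $G_i$ is $(i+1)(d-i)\,deg_d(\tau)$, and $Vol_d(X_{i+1})/Vol_d(X_i)=(d-i)/(i+2)$ for a pure complex; the factors $(i+1)$ and $(d-i)$ cancel, leaving $\phi(G_i)\ge \epsilon/(i+2)\ge \epsilon/(d+1)$, and Cheeger gives $\alpha_2\le 1-\epsilon^2/(2(d+1)^2)$ uniformly in $i$.

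The genuine gap is the bottom of the spectrum. Colorful expansion is a one-sided cut condition and cannot control $\alpha_{|V_i|}$: a near-bipartition of $G_i$ does not yield any cochain with a small expander face, so there is no ``witness cochain violating $\epsilon$-colorful expansion'' to extract from a near-$(-1)$ eigenvector, and the bipartite (dual) Cheeger inequality runs the wrong way for your purposes --- it upper-bounds $1+\alpha_{\min}$ by a bipartiteness ratio; it does not lower-bound it using edge expansion. The negative end has to be handled structurally rather than by expansion: either by lazification, which is exactly what the paper does in Lemma~\ref{lem:colormix} (replacing $\mu$ by $\mu'=(1+\mu)/2$), or by noting that the non-lazy up--down walk satisfies $\tilde A_i=\bigl((i+2)\,UD-I\bigr)/(i+1)$ with $UD$ positive semidefinite, whence $\alpha_{|V_i|}\ge -1/(i+1)$ --- a bound that is vacuous for $i=0$. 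As written, your step for $|\alpha_{|V_i|}|$ would not go through; you should either restrict the claim to $\alpha_2$ (which is all that is actually used once the walk is lazified) or replace the bipartite-Cheeger step with one of these two arguments.
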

        
        To keep the absolute value of the second eigenvalue larger than that of the last eigenvalue, lazy random walk is used with mixing rate $\mu'=\frac{1+\mu}{2}$. Lemma~\ref{lem:rate} reveals that $\mu'= 1-\frac{\epsilon^2}{36}$ for lazy 2-dimensional random walks (2DRW), while mixing rate of the lazy random walk on $\epsilon$-expander is $1-\frac{\epsilon^2}{4}$. Therefore, 2DRW needs more steps to converge.
        \begin{lemma}\label{lem:ine11} For any $\epsilon$ with $0<\epsilon\leq 1$, the following inequality holds:
            $$1-\frac{\epsilon^2}{4}>(1-\frac{\epsilon^2}{36})^{11}$$
        \end{lemma}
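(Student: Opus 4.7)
\medskip
\noindent\textbf{Proof proposal for Lemma \ref{lem:ine11}.}

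The plan is to reduce to a one–variable real analysis problem by substituting $x=\epsilon^{2}\in(0,1]$, so the claim becomes $f(x)>0$ on $(0,1]$ where
\[
f(x)\;=\;\Bigl(1-\tfrac{x}{4}\Bigr)-\Bigl(1-\tfrac{x}{36}\Bigr)^{11}.
\]
The key heuristic is that the derivatives at $0$ give $-\tfrac14$ versus $-\tfrac{11}{36}=-\tfrac{11}{36}$, and since $\tfrac14=\tfrac{9}{36}<\tfrac{11}{36}$, the right–hand side drops faster than the left near the origin; higher–order convexity then has to be controlled so that this initial gap is not eaten up before $x=1$.

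Two concrete routes are available. Route (a): use the elementary bound $(1-x/36)^{11}\le e^{-11x/36}$ (valid since $1-y\le e^{-y}$ for $y\ge0$) and then verify the smoother inequality $1-x/4>e^{-11x/36}$ on $(0,1]$. For the latter, set $h(x)=1-x/4-e^{-11x/36}$; then $h(0)=0$, $h'(0)=-\tfrac14+\tfrac{11}{36}=\tfrac{1}{18}>0$, and $h'$ has a single sign change in $(0,1]$ at the point where $e^{-11x/36}=9/11$, so $h$ increases then decreases and is positive on $(0,1]$ as soon as the endpoint value $h(1)=\tfrac34-e^{-11/36}$ is positive, which is a numeric check. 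Route (b): work directly with $f$, note $f(0)=0$ and $f'(0)=\tfrac{1}{18}>0$, analyze $f'(x)=-\tfrac14+\tfrac{11}{36}(1-x/36)^{10}$ to see that $f'$ changes sign exactly once on $[0,1]$, so $f$ is unimodal on $[0,1]$ and positivity on $(0,1]$ again follows from $f(1)=\tfrac34-(35/36)^{11}>0$.

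I would present Route (a) because it avoids having to handle the tenth power explicitly and reduces the whole argument to one smooth concave comparison plus one endpoint verification. First I would state and apply $(1-y)^{n}\le e^{-ny}$ for $y\in[0,1]$, $n\in\mathbb{N}$. Next I would compare slopes at $0$ to get the local inequality $1-x/4>e^{-11x/36}$ for small $x$. Finally I would rule out a sign change on $(0,1]$ by analyzing the unique critical point of $h$ and computing $h(1)$.

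The main obstacle is a structural rather than technical one: both functions are numerically very close at $x=1$ (the gap is only of order $10^{-2}$), so any slicker ``global'' bound like a single first–order Bernoulli estimate would be too coarse. The proof really does need the combination of a convexity (or monotonicity–of–derivative) argument on the full interval together with the endpoint numerical check, and care must be taken that the chosen bounding function is tight enough at $x=1$ for the comparison to survive.
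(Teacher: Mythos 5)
Your Route (b) is precisely the paper's proof (modulo the cosmetic substitution $x=\epsilon^2$): the paper defines $f(\epsilon)=1-\tfrac{\epsilon^2}{4}-\bigl(1-\tfrac{\epsilon^2}{36}\bigr)^{11}$, computes $f'$, notes it changes sign exactly once on $(0,1]$ so $f$ is unimodal, and concludes from $f(0)=0$ together with $f(1)=\tfrac34-(35/36)^{11}>0$. Your preferred Route (a), which first replaces $(1-x/36)^{11}$ by the bound $e^{-11x/36}$ and then runs the same increasing-then-decreasing plus endpoint argument on $h$, is a mild streamlining of the identical strategy and is also correct.
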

        \begin{proof}
            Let function $f(\epsilon)=1-\frac{\epsilon^2}{4}-(1-\frac{\epsilon^2}{36})^{11},$ so $f(0)=0$, $f(1)=\frac{3}{4}-(\frac{35}{36})^{11}>0$, the following equation holds,
            \begin{equation}
                \begin{aligned}
                    \frac{\partial{f(\epsilon)}}{\partial{\epsilon}} &=-\frac{\epsilon}{2}+11(1-\frac{\epsilon^2}{36})^{10}\cdot\frac{\epsilon}{18}
                    \\&=\frac{\epsilon}{18}\cdot(11(1-\frac{\epsilon^2}{36})^{10}-9)
                    \end{aligned}
            \end{equation}
            Therefore, $\epsilon=0.8457$ when $\partial{f(\epsilon)}/\partial{\epsilon}=0$, which means $f(\epsilon)$ monotonically increases when $\epsilon\in(0,0.8457]$ and decreases when $\epsilon\in(0.8457,1]$. To conclude, $f(\epsilon)>0$ when $\epsilon\in(0,1]$ and the inequality holds.
        \end{proof}
        According to Lemma~\ref{lem:ine11}, $\mu'^{11}$ is enough for convergence of endpoint distributions on $G$.
\end{proof}

\subsection{Proof of Lemma \ref{lem:kct} (See Page~\pageref{lem:kct})}\label{lem6}
\begin{proof}
    Here are the testers invoked in $k$-$cluster$-$test$: $l_2^2$-norm tester judges if a sufficiently long random walk from the vertex or edge can cover a large fraction of the graph; $l_2$-distribution tester distinguishes the Euclidean closeness of two distributions that is related to whether the given two vertices or edges are in the same cluster.

\begin{definition} \textbf{($l_2^2$-norm tester~\cite{czumaj2015}.)} Let $p$ be the probability distribution over a set of maximum size $n$. There exists an algorithm, called $l_2^2$-$norm(p_u^l,\theta,m)$, that takes $m$ samples of $p$ as input. It accepts the distribution if $\|p\|_2^2\leq\theta/4$ and rejects the distribution if $\|p\|_2^2>\theta$ with probability at least $1-\frac{16\sqrt{n}}{m}$. Running time of the tester is $O(m)$.
\end{definition}

\begin{definition} \textbf{($l_2$-distribution tester~\cite{czumaj2015}.)} Let $c_{3.1}$ be a constant with $c_{3.1}\geq 1$, $\delta,\xi >0$ and $p,q$ be two distributions over a set of size $n$ with $b>\max\{\|p\|_2^2,\\\|q\|_2^2\}$. Let $m>c_{3.1}\cdot\frac{\sqrt{b}}{\xi}\ln{\delta}$. There exists an algorithm, called $l_2^2$-$distribution$ tester, that takes as input $m$ samples from each distribution $p$,$q$, and accepts the distribution if $\|p-q\|_2^2\leq \xi$, and rejects the distributions if $\|p-q\|_2^2\leq 4\xi$. Running time of the tester is $O(m)$.
Note that $b$ and $\xi$ are implicit parameters that is used to set bounds on $m$. 
\end{definition}
Similar to the proof of Lemma 4.5$\sim$4.10 in~\cite{czumaj2015}, following lemmas hold with only values of parameters modified. On one hand, the completeness of $k$-$cluster$-$test$ is shown as follows,
\begin{lemma}\label{lem:accept} If the input graph $G$ is $2$-dimension ($k,\psi$)-clusterable, the algorithm $k$-$cluster$-$test$ accepts $G$ with probability at least $\frac{5}{6}$.
\end{lemma}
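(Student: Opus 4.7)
The plan is to exhibit a high-probability event under which the similarity graph $H$ built by $k$-$cluster$-$test$ has at most $k$ connected components, so that the algorithm accepts (assuming the intended reading of the final branch of Algorithm~\ref{alg:kct}). The argument will mirror the three-stage template of Lemmas 4.5--4.10 in~\cite{czumaj2015}, with the simple random-walk mixing replaced by the lazy $2$-dimensional random-walk mixing from Lemma~\ref{lem:colormix}. Fix the guaranteed $2$-dimension $(k,\psi_{in},\psi_{out})$-cluster $\mathbb{P}=(P_1,\dots,P_h)$ with $h\le k$ and $\psi_{out}=O(\epsilon^4\psi^2/\log n)$.

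First I would localize the analysis to a single cluster. Because $X^2(G[P_i])$ is a $\psi$-colorful expander by Definition~\ref{def:par}, Lemma~\ref{lem:colormix} applied inside $G[P_i]$ with the prescribed $l=\frac{11\max\{c_{4.2},c_{4.3}\}k^4\log n}{\psi^2}$ drives the walk restricted to $P_i$ to within $\ell_2$-distance $n^{-\Omega(1)}$ of its stationary distribution $\pi_{P_i}$. A standard escape argument, fueled by $\psi_{out}=O(\epsilon^4\psi^2/\log n)$, then shows that the probability mass that ever leaves $P_i$ within $l$ steps is $O(\epsilon/k)$ from every vertex in a ``good'' subset $P_i^{good}\subseteq P_i$ of size at least $(1-\epsilon/(6k))|P_i|$. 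Combining the two bounds yields, for every $u\in P_i^{good}$,
\[
\|\pi_u^l-\pi_{P_i}\|_2=O(\epsilon/k)\qquad\text{and}\qquad\|\pi_u^l\|_2^2\le\frac{1}{|P_i|}+o(1)\le\frac{\theta}{4},
\]
since $|P_i|\ge n/k$ and $\theta=288sk/n$; in particular $\|\pi_u^l-\pi_v^l\|_2^2$ lies well below the acceptance threshold of the $l_2$-distribution tester for every in-cluster pair $u,v\in P_i^{good}$.

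Second I would pipe these per-cluster bounds through the two sub-testers and the sampling step. The $l_2^2$-norm tester accepts each good sample with failure at most $16\sqrt n/m$, and the choice $m=384c_{3.1}s\sqrt{skn}\ln s$ makes a union bound over all $s=\frac{1536k\ln(18(k+1))}{\epsilon^2}$ samples cost at most $\frac{1}{36}$. The $l_2$-distribution tester fails on any good in-cluster pair with probability at most $\delta=\frac{1}{24s^2}$, which survives a union bound over $\binom{s}{2}$ pairs in another $\frac{1}{36}$ budget. A Chernoff bound on the initial uniform sample of $s$ vertices (and, in the second invocation, edges) shows that with probability at least $1-\frac{1}{36}$ no sample lands in the bad set $\bigcup_i(P_i\setminus P_i^{good})$, whose total size is at most $\epsilon n/6$. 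Under the intersection of these three events, every sampled pair in the same cluster produces an edge of $H$, so each $P_i$ contributes at most one connected component; hence $H$ has at most $h\le k$ components and the tester accepts, with total failure at most $3/36\le 1/6$.

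The main obstacle is the first step: I must verify that the eleven-fold slowdown of lazy $2$-dimensional random walks, $\mu'^{11}<1-\psi^2/4$ from Lemma~\ref{lem:ine11}, combined with the chosen $l$, actually reproduces the same quantitative $\ell_2$-closeness of endpoint distributions that~\cite{czumaj2015} extracts from the ordinary expander mixing rate $1-\psi^2/4$. Concretely, one has to confirm that the constants $c_{4.2},c_{4.3}$ lift across the power-$11$ slowdown so that $(\mu'^{11})^l\le n^{-C}$ for the large constant $C$ required by the escape bound and the norm/distance thresholds of the sub-testers; once this calibration is checked, the rest of the proof is a direct translation of the union bound from~\cite{czumaj2015}.
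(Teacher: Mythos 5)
Your proof follows essentially the same route as the paper's: both reduce the claim to the three-stage template of Czumaj et al.\ (Lemmas~4.5--4.10), splitting the failure budget between the event that every sample lands in a large, well-mixed ``good'' subset of its cluster (driven by Lemma~\ref{lem:colormix} together with the $\psi_{out}$-controlled escape bound) and the event that the $l_2^2$-norm and $l_2$-distribution sub-testers succeed on every sample and in-cluster pair, with a union bound closing the argument. You also correctly flag the apparent sign flip in the final branch of Algorithm~\ref{alg:kct} --- it should accept when $H$ has at most $k$ connected components --- which the paper's own proof implicitly assumes.
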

Now we give proof of Lemma~\ref{lem:accept}. Here we need the definition of \textit{good vertex (or edge)}.
\begin{definition} A vertex (or edge) $u$ is called good if the following three conditions are satisfied:
    \begin{itemize}[topsep=3pt, partopsep=3pt, itemsep=3pt]
        \item[1.] $\|p_{u}^l\|_2^2\leq\frac{72sk}{n}$.
        \item[2.] $|C(u)|\geq \frac{1}{36sk}n$, where $C(u)$ is the unique cluster that contains v.
        \item[3.] $v\in\tilde{C(u)}$, where $\tilde{C(u)}$ satisfies $|\tilde{C(u)}|\geq (1-\frac{1}{36s})|C(u)|$ and for any two vertices (or edges) $a$ and $b$ in $\tilde{C(u)}$, $\|p_a^l-p_b^l\|_2^2\leq \frac{1}{4n}$.
    \end{itemize}
\end{definition}
The following two lemmas can be deduced with the same process of Theorem 4.6 and 4.7 in~\cite{czumaj2015}. 
\begin{lemma}\label{lem:good} With probability at least $\frac{11}{12}$, all vertices in the sampled set $S$ are good.
\end{lemma}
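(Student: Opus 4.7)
\subsection*{Proof Proposal for Lemma \ref{lem:good}}

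The plan is to show that, averaged over the vertex (or edge) set, the three defining conditions of a good object each fail only on a small fraction of the ground set, then combine this with a union bound over the $s$ independent uniform samples. Concretely, I would argue that in each case the fraction of \emph{bad} vertices is at most $\frac{1}{36s}$, so that the probability a given sample is bad in some condition is at most $\frac{3}{36s} = \frac{1}{12s}$; a union bound over the $s$ samples then gives failure probability at most $\frac{1}{12}$, i.e.\ success probability at least $\frac{11}{12}$.

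For condition (1), I would appeal to Lemma \ref{lem:colormix} applied to the induced $0$-graph (or $1$-graph in the edge case) of the $2$-dimensional complex built on $G$. Since each cluster $P_i$ is a $2$-dimension $\psi$-colorful expander, the lazy $2$-dimension random walk started inside $P_i$ mixes toward the stationary distribution on $P_i$, which has $\ell_2^2$-norm of order $\frac{1}{|C(u)|}$. The choice $l = \Theta(k^4 \log(n)/\psi^2)$ in Lemma \ref{lem:kct} is designed so that the mixing error becomes dominated by the stationary term, and the ratio $d_{\max}/d_{\min}$ factor in Lemma \ref{lem:colormix} is absorbed into the constant. A Markov-type argument on $\sum_u \|p_u^l\|_2^2$ then yields that at most a $\frac{1}{36s}$-fraction of vertices violate the bound $\frac{72sk}{n}$.

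Condition (2) is purely combinatorial: there are at most $k$ clusters, so the total number of vertices lying in clusters of size smaller than $\frac{n}{36sk}$ is at most $k \cdot \frac{n}{36sk} = \frac{n}{36s}$, hence a uniformly random vertex violates (2) with probability at most $\frac{1}{36s}$. For condition (3), the main idea is that within a single cluster $P_i$ the internal $\psi$-expansion forces the endpoint distributions of the walks starting at different vertices of $P_i$ to be $\ell_2$-close to a common distribution concentrated on $P_i$. Formally, I would define $\tilde{C}(u)$ as the set of vertices in $C(u)$ whose $l$-step endpoint distribution lies in a small $\ell_2$-ball around the stationary distribution on $C(u)$; the mixing estimate of Lemma \ref{lem:colormix} together with a counting (averaging) argument shows that all but a $\frac{1}{36s}$-fraction of $C(u)$ lie in this set, and the triangle inequality then gives the pairwise bound $\|p_a^l - p_b^l\|_2^2 \leq \frac{1}{4n}$ demanded in the definition.

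The main obstacle is condition (3): the bookkeeping needed to pass from the individual $\ell_2$-mixing bound to a simultaneous pairwise statement inside $\tilde{C}(u)$ while keeping the slack $\frac{1}{36s}$ small enough for the final union bound. This is where the polynomial dependence on $k$ and $\psi$ in $l$ is consumed, and one has to be careful that the laziness factor $\mu'^{11}$ supplied by Lemma \ref{lem:colormix} genuinely absorbs the extra factors introduced by passing from a single cluster's stationary distribution to a common centre. Once these three bounds are in place, a union bound over the three conditions and the $s$ independently sampled vertices (or edges) completes the proof, exactly paralleling Theorem 4.6 of~\cite{czumaj2015} with the higher-order mixing rate replacing the classical one.
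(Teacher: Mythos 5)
Your proposal follows the same route the paper takes: it ports the argument for Theorem~4.6 of~\cite{czumaj2015}, replacing the classical lazy-random-walk mixing bound with the lazy $2$-dimension random walk bound of Lemma~\ref{lem:colormix} (with the walk length $l$ carrying the extra factor of $11$), bounding the fraction of vertices violating each of the three goodness conditions by $\tfrac{1}{36s}$, and closing with a union bound over the three conditions and the $s$ samples. The paper itself gives no further detail beyond asserting that the lemma follows ``with the same process of Theorem 4.6 and 4.7 in~\cite{czumaj2015},'' so your sketch is a faithful and correct unpacking of exactly that derivation.
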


\begin{lemma}\label{lem:goodtest} If all the sampled vertices $v\in S$ are good, $k$-$cluster$-$tester$ will accept $G$ with probability at least $\frac{11}{12}$.
\end{lemma}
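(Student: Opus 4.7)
The plan is to condition on every sampled vertex $v \in S$ being good and then verify that all three decisions inside $k$-cluster-test come out correctly with probability at least $11/12$; once this event holds, the structure of the similarity graph $H$ will force acceptance. Throughout, I would exploit the specific parameter choices of Lemma~\ref{lem:kct}: $\theta/4 = 72sk/n$ is tuned to match good-condition~1, $\xi = 1/(4n)$ is tuned to match good-condition~3, and $\delta = 1/(24 s^2)$ is tuned so that a union bound over the $\binom{s}{2}$ pair tests costs only a constant.

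First I would dispatch the $s$ invocations of $l_2^2$-$norm$. Good-condition~1 gives $\|p_v^l\|_2^2 \le 72sk/n = \theta/4$, which is exactly the acceptance regime of the $l_2^2$-norm tester. Each call fails with probability at most $16\sqrt n / m$; with the prescribed $m = 384\, c_{3.1}\, s \sqrt{skn}\, \ln s$, a union bound over the $s$ samples makes the total failure $O(1/(\sqrt{sk}\,\ln s))$, comfortably smaller than $1/36$ for the stated parameters.

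Second I would analyse the $\binom{s}{2}$ invocations of $l_2$-$distribution$, splitting into same-cluster and cross-cluster pairs. For a same-cluster pair, both samples lie in $\tilde C(u) = \tilde C(v)$ by goodness, so $\|p_u^l - p_v^l\|_2^2 \le 1/(4n) = \xi$ and the tester accepts, inserting the edge $(u,v)$ into $H$. For a cross-cluster pair the required bound is $\|p_u^l - p_v^l\|_2^2 > 4\xi = 1/n$. I would derive this by combining Lemma~\ref{lem:colormix} with the walk length $l = 11 \max\{c_{4.2}, c_{4.3}\} k^4 \log n / \psi^2$ to show that after $l$ lazy steps $p_u^l$ concentrates on $C(u)$ up to a leakage mass bounded via the hypothesis $\psi_{out} = O(\epsilon^4 \psi^2 / \log n)$; the near-disjointness of the supports then yields $\|p_u^l - p_v^l\|_2^2 \gtrsim 1/|C(u)| + 1/|C(v)|$, which exceeds $1/n$ thanks to good-condition~2 that forbids either cluster from swallowing all of $V$. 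Each individual call fails with probability at most $\delta = 1/(24 s^2)$, so a union bound over all pairs contributes at most $1/48$ to the failure.

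Finally, in the event that every test answers correctly, $H$ has an edge between any two same-cluster samples and no cross-cluster edge. The connected components of $H$ therefore stand in bijection with the clusters of $G$ represented in $S$, a number bounded by $k$ since $G$ is $2$-dimension $(k,\psi)$-clusterable; the tester reports ``at most $k$ components'' and accepts. Adding the two union-bound failures keeps the total below $1/12$, which gives the claimed acceptance probability $\ge 11/12$. The main technical obstacle is the cross-cluster case of the $l_2$-distribution test, because goodness only bounds \emph{within}-cluster closeness; the crucial lever is the rapid convergence of lazy 2DRW on each cluster from Lemma~\ref{lem:colormix} together with the log-gap calibration of $\psi_{out}$, which is designed precisely so that the mass leaking out of $C(u)$ in $l$ steps is negligible next to the $1/n$ gap that $l_2$-$distribution$ needs to reject.
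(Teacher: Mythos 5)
Your core plan — condition on goodness, check the $s$ calls to $l_2^2$-$norm$ against good-condition~1, check the $\binom{s}{2}$ same-cluster calls to $l_2$-$distribution$ against good-condition~3, and union-bound — is exactly the structure of Theorem~4.7 in Czumaj et al.\ that the paper defers to, and your accounting of the parameters ($\theta/4$ versus the $72sk/n$ threshold, $\xi=1/(4n)$, $\delta=1/(24s^2)$) is on target. You also correctly read the intended semantics of the final line of $k$-$cluster$-$test$ (accept when $H$ has at most $k$ components), notwithstanding the apparent accept/reject swap in the pseudocode.

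However, there is a structural misunderstanding that makes the proposal longer and shakier than it should be. You identify the cross-cluster case of $l_2$-$distribution$ as ``the main technical obstacle,'' but for the completeness direction proved here it is not an obstacle at all: if every same-cluster pair passes the $l_2$-$distribution$ test, the samples within each cluster already form a clique in $H$, so the number of connected components is at most the number of clusters of $G$, which is at most $k$. Any spurious cross-cluster edge can only \emph{merge} two of these cliques and hence can only \emph{decrease} the number of components; it can never push $H$ above $k$ components. Thus acceptance requires no lower bound whatsoever on $\|p_u^l-p_v^l\|_2^2$ for $u,v$ in different clusters. The cross-cluster separation is genuinely needed only for the soundness direction (Lemma~\ref{lem:bad}), where one must show that a representative sample forces \emph{more} than $k$ components. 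Moreover, your sketch of the cross-cluster bound misattributes the $>1/n$ gap to good-condition~2, which is a \emph{lower} bound $|C(u)|\ge n/(36sk)$ and hence only gives an \emph{upper} bound on $1/|C(u)|$; the bound you want would instead come from disjointness of the clusters forcing at least one of $|C(u)|,|C(v)|$ to be at most $n/2$. Dropping the cross-cluster analysis entirely yields a shorter, cleaner, and fully rigorous proof, and it reallocates the full $1/12$ failure budget to the two union bounds you actually need.
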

Combining Lemma~\ref{lem:good} and~\ref{lem:goodtest}, proof of Lemma~\ref{lem:accept} is finished.

On the other hand, the soundness of $k$-$cluster$-$test$ is shown as follows,
\begin{lemma}\label{lem:reject} Let $\gamma_{d_{max},k}>0$ be some constant depending on $d_{max},k$. If the given graph $G$ is $\epsilon$-far from $(k,\psi*)$-clusterable with $\psi*\leq\frac{\gamma_{d_{max},k}\epsilon^2}{sl}$, then $k$-$cluster$-$test$ rejects $G$ with probability at least $\frac{5}{6}$.
\end{lemma}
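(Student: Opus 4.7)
The plan is to prove the contrapositive: assuming $k$-$cluster$-$test$ accepts $G$ with probability exceeding $1/6$, I will construct a $2$-dimension $(k,\psi^*)$-cluster that differs from $G$ by at most $\epsilon d_{max}n$ edges. This mirrors the soundness analysis of Czumaj et al.~\cite{czumaj2015}, but must be lifted to high-order random walks via Lemma~\ref{lem:colormix} and applied to both the vertex and edge phases of Algorithm~\ref{alg:main}.

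First I would fix any accepting run. Acceptance yields two structural facts on the sample $S$: (i) for every $u \in S$ the $l_2^2$-$norm$ tester passed, so $\|p_u^l\|_2^2 \leq \theta = O(sk/n)$, and (ii) the similarity graph $H$ has at most $k$ connected components. Pick a representative $r_1,\dots,r_h$ (with $h \leq k$) per component and extend the clustering to all of $V$ (respectively $E$ in the edge phase) by assigning each $v$ to the index $i$ minimising $\|p_v^l - p_{r_i}^l\|_2$, breaking ties arbitrarily. Call the resulting partition $\mathbb{P}=(P_1,\dots,P_h)$.

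The next step is to bound the fraction of \emph{bad} vertices, those far in $l_2$ from their assigned representative. A Chernoff--union argument in the spirit of Lemmas 4.8 and 4.9 of Czumaj et al.~\cite{czumaj2015} shows that if more than an $\epsilon/(3k)$-fraction of vertices were bad, then with probability larger than $1/6$ the sample $S$ would contain such a vertex, triggering either an $l_2^2$-$norm$ rejection or an erroneous split by $l_2$-$distribution$; this contradicts acceptance with probability exceeding $1/6$. Hence all but an $O(\epsilon)$-fraction of vertices lie close to their representative, and the analogous statement holds for the edge phase with the $2$-dimension lazy walk in place of the vertex walk. The final step is then to verify that modifying only the edges incident to bad vertices and the edges crossing $\mathbb{P}$ in excess of $\psi^*$ produces a graph $G'$ whose complex realises a $2$-dimension $(k,\psi^*)$-cluster: closeness of endpoint distributions within each $P_i$ yields, via the Cheeger-type inequality underlying Lemma~\ref{lem:colormix}, normalised internal conductance at least $\psi^*$ on both the induced $(X_0,X_1)$- and $(X_1,X_2)$-subcomplexes, while distance of representatives, enforced by $\delta$, bounds external conductance. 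The constant $\gamma_{d_{max},k}$ is exactly what keeps the total edge modification below $\epsilon d_{max}n$ under $\psi^* \leq \gamma_{d_{max},k}\epsilon^2/(sl)$.

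The main obstacle is the two-level nature of the $d=2$ cluster requirement: the partition extracted from vertex-endpoint distributions must simultaneously give good conductance on the induced $1$-subcomplex of triangles. The Czumaj et al.\ analysis handles only a single walk level, so the delicate point will be to combine the two phases of Algorithm~\ref{alg:main} coherently, ensuring that bad edges detected in the edge phase are consistent with bad vertices from the vertex phase so that the union of modifications is still $O(\epsilon d_{max}n)$. A second, quantitative obstacle is that the lazy $2$-dimension walk mixes $11$ times more slowly (Lemma~\ref{lem:colormix}), inflating $l$; propagating this factor carefully through the reconstruction argument is precisely what produces the $sl$ in the denominator of the threshold on $\psi^*$.
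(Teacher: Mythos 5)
Your argument is the contrapositive of the paper's: you assume $k$-$cluster$-$test$ accepts with noticeable probability and reconstruct a $2$-dimension $(k,\psi*)$-cluster within $\epsilon d_{max}n$ edge modifications, whereas the paper works forward by adapting Czumaj et al.'s soundness decomposition. Specifically, the paper introduces a \emph{representative sample set} $S$ (one that intersects each of $k+1$ disjoint pieces $P_1,\dots,P_{k+1}$ that witness $\epsilon$-farness and is contained in their union) and splits the claim into two lemmas mirroring Czumaj et al.'s Theorems 4.9 and 4.10: when $G$ is $\epsilon$-far, $S$ is representative with probability at least $11/12$; and a representative $S$ forces the similarity graph $H$ into more than $k$ components, hence rejection with probability at least $23/24$. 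These are logically equivalent routes --- the existence of the $k+1$ separated pieces is itself established in Czumaj et al.\ by essentially the reconstruction contrapositive you outline --- but the paper black-boxes the reconstruction inside the cited theorems while you unroll it. Your version has the virtue of exposing where the high-order adaptations enter (the factor-$11$ mixing slowdown from Lemma~\ref{lem:colormix} and the need to certify both the $(X_0,X_1)$ and $(X_1,X_2)$ levels), and of motivating the $sl$ dependence in the threshold on $\psi*$.

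Two cautions. First, Lemma~\ref{lem:reject} is about a single invocation of $k$-$cluster$-$test$; coherence between the vertex and edge phases is handled by a union bound at the level of Theorem~\ref{the:cor}, not inside this lemma, so the ``combine the two phases'' obstacle you flag is not one you need to solve here. Second, you attribute the step from endpoint-distribution closeness to bounded internal conductance to ``the Cheeger-type inequality underlying Lemma~\ref{lem:colormix},'' but that lemma is a pure mixing-rate bound. The step you actually need is Czumaj et al.'s clusterability-extraction structural machinery transplanted to the lazy $2$-dimension walk, which is separate from the mixing-rate statement; the paper's proof is also silent on the details of this transplant, but it should not be attributed to the wrong lemma.
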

Now we give proof of Lemma~\ref{lem:reject}. Here we need the definition of \textit{representative sample set}.
\begin{definition} Sample set $S$ on the disjoint sets $P_1,P_2,\dots P_{k+1}$ is said to be \textit{representative} if for all $i$ with $1\leq i\leq k+1$, $P_i\cap S\neq\emptyset$ and $S\subseteq\bigcup_{j=1}^{k+1}P_j$.
\end{definition}
The following two lemmas can be deduced with the same process of Theorem 4.9 and 4.10 in~\cite{czumaj2015}. 
\begin{lemma}\label{lem:repre} The probability that the sample set $S$ is representative is at least $\frac{11}{12}$.
\end{lemma}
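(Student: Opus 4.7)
The plan is to follow the blueprint of Theorem~4.9 in Czumaj et al.~\cite{czumaj2015}, adapted to the current higher-order setting. I would first leverage the $\epsilon$-far assumption to extract a witness partition: since $G$ cannot be made $2$-dimension $(k,\psi^*)$-clusterable by modifying at most $\epsilon d_{\max} n$ edges, there must exist disjoint subsets $P_1, P_2, \ldots, P_{k+1}$ of $V$ (and analogously of $E$ for the second sampling round) with the properties that (a) each $P_i$ already has good internal higher-order structure, (b) $\bigcup_{j=1}^{k+1} P_j$ covers all but a small outlier set, and (c) no way to merge these $k{+}1$ parts into only $k$ clusters is feasible within the edge-edit budget.

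Next, I would establish a size lower bound: each $P_i$ contains $\Omega(\epsilon n / k)$ elements. The argument is by contradiction---if some $P_i$ were smaller, one could absorb it into a neighboring part by modifying only $O(d_{\max} |P_i|)$ edges, contradicting the $\epsilon$-far hypothesis. A parallel argument shows that the total outlier mass outside $\bigcup_j P_j$ is at most $O(\epsilon n)$, so a uniform sample avoids $\bigcup_j P_j$ with probability at most $O(\epsilon)$.

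With these size bounds in hand, the third step is a coupon-collector union bound tuned to the given $s = \frac{1536\, k \ln(18(k+1))}{\epsilon^2}$. For each part, the probability that none of the $s$ samples hits $P_i$ is at most $(1 - c\epsilon/k)^s \le \exp(-c s\epsilon / k)$, and the choice of $s$ drives this below $\frac{1}{24(k+1)}$. A union bound over the $k{+}1$ parts, combined with a Markov-type bound on the outlier event, yields that $S$ is representative with probability at least $\frac{11}{12}$.

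The main obstacle, and where the $2$-dimension setting diverges from the $1$-dimension analysis, is step two: establishing the size lower bound on each $P_i$ in the higher-order regime. In one dimension, absorbing a small cluster costs $O(d_{\max} |P_i|)$ edge edits, which follows immediately from degree bounds. In two dimensions, the forbidden clusterability is phrased in terms of triangle conductances on the normalized $1$-graph, so it is not transparent that few edge edits suffice to repair a small offending part. Here I plan to invoke Theorem~\ref{the:one} to descend from the $2$-dimension cluster requirement to its $1$-dimension counterpart, after which the Czumaj-style absorption argument applies almost verbatim, producing the desired lower bound and closing the proof.
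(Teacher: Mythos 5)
The paper offers no independent proof of this lemma; it simply asserts that the lemma ``can be deduced with the same process of Theorem 4.9 and 4.10 in [Czumaj et al.]'', so your three-step blueprint (extract a witness $(k{+}1)$-partition, establish size lower bounds, then a coupon-collector union bound tuned to $s$) is a faithful expansion of what the paper is alluding to. You are also right to flag the size lower bound on each $P_i$ as the point where the $2$-dimension setting could diverge from the $1$-dimension analysis.

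However, the specific fix you propose in the last paragraph has a directionality problem. Theorem~\ref{the:one} asserts that a good $d$-dimension cluster \emph{descends} to a good $t$-dimension cluster for $t<d$; in other words, $2$-dimension clusterability is the \emph{stronger} property and implies $1$-dimension clusterability. But the absorption argument runs by contradiction: you assume some $P_i$ is small, absorb it via $O(d_{\max}|P_i|)$ edge edits, and then need to certify that the repaired graph is $2$-dimension $(k,\psi^*)$-clusterable to contradict $\epsilon$-farness from the \emph{$2$-dimension} property. Verifying $1$-dimension clusterability of the repaired graph (which is what the Czumaj-style argument would yield after your descent step) is strictly weaker and does not produce the contradiction, precisely because $\epsilon$-far from the stronger $2$-dimension property does not imply $\epsilon$-far from the weaker $1$-dimension one. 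Closing the gap requires either a converse of Theorem~\ref{the:one} for the repaired instance (which the paper does not provide and which fails in general) or a direct argument that the edge edits preserve triangle-conductance bounds on the merged parts. The paper's terse ``same process'' remark arguably glosses over this same issue, so the obstacle you identified is real---but the cure you reached for does not treat it.
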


\begin{lemma}\label{lem:bad} If $S$ is representative, then $k$-$cluster$-$test$ rejects $G$ with probability at least $\frac{23}{24}$. 
\end{lemma}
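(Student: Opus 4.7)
The plan is to mirror the soundness analysis of Czumaj et al.~\cite{czumaj2015} (their Theorem 4.10), lifted to the 2-dimensional random walk setting used here. Let $P_1,\ldots,P_{k+1}$ be the witness partition supplied by the definition of a representative sample, so that $S$ contains at least one representative from each $P_i$. I would aim to show that with probability at least $23/24$, for every pair of sampled points lying in distinct $P_i$'s the $\ell_2$-distribution tester rejects (and hence no edge is added to $H$ between them) and simultaneously no call to the $\ell_2^2$-norm tester aborts. This forces $H$ to decompose into at least $k+1$ connected components, which triggers the Reject branch of $k$-$cluster$-$test$.

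First, I would fix the structural partition and establish a \emph{local} mixing statement: for $u$ lying in a piece $P_i$ whose internal $2$-dimensional conductance exceeds $\psi$, an $11l$-step lazy 2DRW started at $u$ has its endpoint distribution $p_u^l$ concentrated on $P_i$ up to an additive leak of at most $O(\psi^* l)$ in total variation. This is obtained by applying Lemma~\ref{lem:colormix} to a coupled 2DRW restricted to the induced $2$-subcomplex on $P_i$, then bounding the escape mass through the external conductance $\psi^* \leq \gamma_{d_{\max},k}\epsilon^2/(sl)$, which makes the leak negligible on the scale $1/|P_i|$ once $l$ is the mixing length already fixed in Lemma~\ref{lem:kct}.

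Second, for good $u \in P_i$ and $v \in P_j$ with $i \neq j$, the supports of $p_u^l$ and $p_v^l$ are essentially disjoint, so $\|p_u^l - p_v^l\|_2^2$ is at least $\Omega(1/|P_i| + 1/|P_j|)$ minus the leak error. Substituting the size lower bound $|P_i| \geq n/(36sk)$ from the definition of good vertex pushes this past the rejection threshold $4\xi$ of the $\ell_2$-distribution tester, so each cross-piece pair is declared far with probability at least $1-\delta$. A union bound over the at most $\binom{s}{2}$ such pairs with $\delta = 1/(24s^2)$ contributes at most $1/48$ failure; a further union bound over the $s$ calls to the $\ell_2^2$-norm tester, using the stated value of $m$, contributes another $1/48$. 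Combined with the probability that all sampled points are good (handled in analogy with Lemma~\ref{lem:good}), the total slack stays below $1/24$, giving success probability at least $23/24$.

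The hard part will be the local mixing estimate in the first step. Lemma~\ref{lem:rate} provides a \emph{global} mixing bound for 2DRW on a colorful expander, but here we need a restricted statement for a single $P_i$ that is only assumed to have high internal $2$-dimensional conductance, while the surrounding graph may be adversarial. The technical work is to simulate 2DRW on the induced $2$-subcomplex $(X_0(G[P_i]), X_1(G[P_i]), X_2(G[P_i]))$ and absorb cross-cluster transitions into a stochastic error term, exploiting the logarithmic gap $\psi_{out}=O(\epsilon^4 \psi_{in}^2/\log n)$ to guarantee that the escape mass over the $O(\log n/\psi^2)$ walk length stays below the $1/(sk)$ threshold required by the $\ell_2$-distribution separation. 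Once this local mixing estimate is in hand, the remaining probability bookkeeping is routine and parallels the analysis of~\cite{czumaj2015}.
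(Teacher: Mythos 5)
Your plan reproduces the completeness (accept-case) analysis, not the soundness argument this lemma actually requires. Lemma~\ref{lem:bad} sits inside the proof of Lemma~\ref{lem:reject}, whose hypothesis is that $G$ is $\epsilon$-far from $(k,\psi^*)$-clusterable; here the sets $P_1,\ldots,P_{k+1}$ are witness sets produced by a spectral structure theorem (the analogue of the lemma preceding Theorem~4.10 in~\cite{czumaj2015}), which guarantees only that each $|P_i|=\Omega(\epsilon n/k)$ and that for most pairs $(u,v)$ with $u\in P_i$, $v\in P_j$, $i\neq j$, the endpoint distributions satisfy $\|p_u^l-p_v^l\|_2^2=\Omega(1/n)$. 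Nothing in the far-from-clusterability hypothesis gives the $P_i$'s high internal $2$-dimensional conductance, a logarithmic gap $\psi_{out}=O(\epsilon^4\psi_{in}^2/\log n)$ across their boundaries, or a cluster assignment $C(u)$. So your first step --- local mixing on a $P_i$ ``whose internal $2$-dimensional conductance exceeds $\psi$'', controlled leakage via the $\psi_{in}$/$\psi_{out}$ gap, and concentration of $p_u^l$ on $P_i$ --- invokes structure that simply is not available in the reject case, and the same goes for your appeal to Lemma~\ref{lem:good}, since ``good'' is defined only relative to an actual cluster decomposition. As a sanity check: if $G$ is a disjoint union of several long paths it is far from $(k,\psi^*)$-clusterable, yet no plausible witness set has high internal conductance and the 2DRW does not mix inside any $P_i$; rejection is forced by the $\ell_2^2$-norm test, not by disjoint supports.

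The correct argument, which the paper simply inherits from Theorem~4.10 of~\cite{czumaj2015}, runs the other way. Conditioned on $S$ being representative, either some sampled walk has large $\ell_2^2$-norm, in which case $k$-$cluster$-$test$ already aborts and rejects; or all sampled norms are small, in which case the separation property of the $P_i$'s (a direct spectral consequence of far-from-clusterability) together with the accuracy of the $\ell_2$-distribution tester and a union bound over the at most $\binom{s}{2}$ cross-$P_i$ pairs, with $\delta=1/(24s^2)$, ensures with probability at least $23/24$ that no cross-$P_i$ edge is added to $H$, so $H$ has at least $k+1$ connected components and the test rejects. You have the right shape for the second branch, but the key separation estimate must come from the structure theorem on far-from-clusterable graphs, not from simulating a 2DRW restricted to a high-conductance piece.
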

Combining Lemma~\ref{lem:repre} and~\ref{lem:bad}, proof of Lemma~\ref{lem:reject} is finished.

Now the proof of Lemma~\ref{lem:kct} follows directly from Lemma~\ref{lem:accept} and~\ref{lem:reject} by setting $\psi*\leq \psi_{out}$. 
\end{proof}
\end{subappendices}
\end{document}